\newtheorem{theorem}{Theorem}
\newtheorem{proof}{Proof}
\newtheorem{lemma}{Lemma}
\newtheorem{claim}{Claim}
\newcommand{\Out}{\mathsf{out}}
\newcommand{\QMF}{\mathrm{QMF}}
\newcommand{\DF}{\mathrm{DF}}
\newcommand{\DDF}{\mathrm{DDF}}
\newcommand{\HYB}{\mathrm{HYB}}
\newcommand{\ACH}{\mathrm{ACH}}
\newcommand{\cut}{\mathrm{cut}}
\newcommand{\lp}{\left(}
\newcommand{\rp}{\right)}
\newcommand{\lb}{\left[}
\newcommand{\rb}{\right]}
\newcommand{\lbp}{\left\{}
\newcommand{\rbp}{\right\}}
\newcommand{\ol}{\overline}
\newcommand{\mcal}{\mathcal}
\newcommand{\what}{\widehat}
\newcommand{\msf}{\mathsf}
\newcommand{\ra}{\rightarrow}
\title{Cooperative Relaying at Finite SNR -- Role of Quantize-Map-and-Forward}
\author{
\authorblockN{Ayan Sengupta, I-Hsiang Wang, Christina Fragouli}\\
\authorblockA{
\'{E}cole Polytechnique F\'{e}d\'{e}rale de Lausanne (EPFL), Lausanne, Switzerland\\
%Berkeley, California 94720, USA\\
Email: \{ayan.sengupta, i-hsiang.wang, christina.fragouli\}@epfl.ch
}
\thanks{The work was supported by the EU project CONECT FP7-ICT-2009-257616 and the ERC Starting Grant project NOWIRE ERC-2009-StG-240317.}
}
\begin{document}
\maketitle
\begin{abstract}
Quantize-Map-and-Forward (QMF) relaying has been shown to achieve the optimal diversity-multiplexing trade-off (DMT) \cite{ZhengTse_03} for arbitrary slow fading full-duplex networks \cite{AvestimehrDiggavi_09} as well as for the single-relay half-duplex network \cite{PawarAvestimehr_08}. A key reason for this is that quantizing at the noise level suffices to achieve the cut-set bound approximately to within an additive gap, without any requirement of instantaneous channel state information (CSI). However, DMT only captures the high SNR performance and potentially, limited CSI at the relay can improve performance at moderate SNRs. In this work we propose an optimization framework for QMF relaying over slow fading channels. Focusing on vector Gaussian quantizers, we optimize the outage probability for the full-duplex and half-duplex single relay by finding the best quantization level and relay schedule according to the available CSI at the relays. For the $N$-relay diamond network, we derive an universal quantizer that sharpens the additive approximation gap of QMF from the conventional $\Theta(N)$ bits/s/Hz \cite{AvestimehrDiggavi_09} \cite{LimKim_10} to $\Theta(\log(N))$ bits/s/Hz using only network topology information. Analytical solutions to channel-aware optimal quantizers for two-relay and symmetric $N$-relay diamond networks are also derived. In addition, we prove that suitable hybridizations of our optimized QMF schemes with Decode-Forward (DF) or Dynamic DF protocols provide significant finite SNR gains over the individual schemes.
\end{abstract}

\begin{IEEEkeywords}
Cooperative Relaying, Outage Probability, Optimization, Relay Scheduling, Quantize-Map-and-Forward
\end{IEEEkeywords}

\section{Introduction}
\label{sec:introduction}

Cooperative communication as a means to provide reliable high data-rate wireless transmission has been extensively studied for the past several decades, both from an information-theoretic viewpoint as well as from the perspective of communication protocols. However, the capacity characterization of the relay channel \cite{CoverElGamal_79} has remained one of the most fundamental open problems in information theory, not to mention that of general wireless relay networks. Various relaying techniques for single-relay and multiple-relay wireless networks have been proposed and studied over the years, including amplify-and-forward (AF), decode-and-forward (DF) \cite{CoverElGamal_79}, compress-and-forward (CF) \cite{CoverElGamal_79}, compute-and-forward \cite{NazerGastpar_11}, and %more recently,
quantize-map-and-forward (QMF) \cite{AvestimehrDiggavi_09} \cite{LimKim_10}.

Among these, QMF was shown to be the first relaying strategy that provably achieves the capacity of arbitrary Gaussian relay networks to within a bounded additive gap, independent of the channel coefficients in the network \cite{AvestimehrDiggavi_09}.
This is referred to as the \emph{universal approximation} property of QMF, and asymptotically, QMF achieves the capacity of Gaussian relay networks as SNRs go to infinity.
As per the original work in \cite{AvestimehrDiggavi_09}, QMF operates on the following principles: each relay quantizes its received signal
%sequence
at the noise level and %(uniformly at random)
maps it onto the transmit codebook; the destination performs a joint decoding operation that together utilizes all the information it receives from the source and the relays to converge on the transmitted source codeword. Unlike conventional CF \cite{CoverElGamal_79} however, explicit decoding of the quantized values from the relay is not a requirement for QMF. Also, the QMF relays in \cite{AvestimehrDiggavi_09} do not require any channel state or network topology information for their operation. Owing to the universal approximation property, QMF is naturally robust to channel fades and hence, a potentially attractive candidate for wireless settings.
In particular, for slow fading wireless relay networks, QMF achieves the optimal diversity-multiplexing tradeoff (DMT) \cite{ZhengTse_03} for arbitrary full-duplex networks \cite{AvestimehrDiggavi_09} and the half-duplex single-relay network \cite{PawarAvestimehr_08}.

However, in spite of possessing the above mentioned attributes, an obvious concern with QMF operation in moderate-SNRs is the additive gap from the information-theoretic cutset upper bound -- $15N$ bits/sec/Hz in \cite{AvestimehrDiggavi_09} improved to $1.26N$ in \cite{LimKim_10}, with $N$ being the number of nodes in the single-source single-destination relay network. Even for reasonably sized networks, these gaps can \emph{a priori} seem prohibitively large for network operation at finite SNRs and existing schemes could potentially outperform QMF at these SNRs even without possessing an universal approximation property. Opposing this concern on the other hand, is the observation that unlike the original works on QMF \cite{AvestimehrDiggavi_09} \cite{LimKim_10}, one could make use of the available channel state information (CSI) at the network nodes, as well as the network topology to boost QMF performance at practically operable SNRs, and translate the inherent advantages of QMF to these regimes.

In this work, we are interested in optimizing the finite-SNR performance of QMF relaying for specific small networks in slow fading scenarios, and the performance metric we take is the outage probability $P_\Out (R)$ at a given rate $R$, which denotes the probability of the system failing to support transmission at rate $R$. Our goal is to examine whether QMF can prove to be a viable alternative or even a scheme of choice in typical operating SNRs when limited channel knowledge is available at the network nodes. We focus on the single-relay network (Fig.~\ref{fig:single_relay}) and the $N$-relay diamond network (a topology introduced in \cite{Schein} in which the source communicates to the destination via $N$ non-interfering relays, Fig.~\ref{N_Relay_Diamond}).
Given (possibly limited) CSI at the relays, we develop an optimization framework to choose the system parameters at the relays, including the quantization level and the half-duplex relay schedule, so that the outage probability $P_\Out (R)$ at a given rate $R$ is minimized.
The two CSI scenarios primarily considered in this work are: (1) the \emph{global CSI} scenario, where all relay nodes have full knowledge of the magnitudes of the fading coefficients in the entire network together with the phase of its incoming link, and (2) the \emph{receiver CSI (CSIR)} scenario, where each relay node only knows the realization (magnitude and phase) of the fading channel coefficient of its incoming link.

We also explore the benefit of incorporating relay decoding into our schemes by using limited CSI to decide when the relay should switch modes. For full-duplex networks, if channel conditions permit, decoding is a natural choice as it cleans up incoming noise at the relay. For half-duplex networks, the decisions are more involved as we have to deal with \emph{dynamic} versions of both the DF and QMF protocols that depend on the listening time (schedule) of the relay. For instance, a schedule that permits decoding may not be the optimal schedule to use. We discuss how our framework deals with such choices in the later sections.

%The metrics commonly used to evaluate performance in slow fading scenarios are the outage probability and the diversity-multiplexing tradeoff (DMT) \cite{ZhengTse_03}. The {outage} probability $P_\Out (R)$ at a given rate $R$ denotes the likelihood of the system failing to support transmission at rate $R$. DMT, on the other hand, captures high SNR approximate outage metrics \cite{ZhengTse_03}. In contrast to other schemes, the universal approximation property of QMF by definition makes it DMT optimal for arbitrary full-duplex networks \cite{AvestimehrDiggavi_09}. Even in the half-duplex case, noise-level QMF along with a fixed half-half schedule (where the relay spends half of the time listening to the source) is shown to achieve the optimal DMT of the single-relay network in \cite{PawarAvestimehr_08}. However, the characterization of finite-SNR outage performance generally remains open, as DMT neglects effects like additive approximation gaps and hence cannot serve as a fundamental metric for rendering a scheme optimal in finite SNR operation.

\subsection*{Contributions}
Our main contributions in this paper are summarized in the following:
\begin{itemize}
%\item[(i)] We develop an optimization framework to find the parameters that optimize the QMF outage performance at finite SNRs given (possibly limited) CSI at the relay.

\item[(i)] For the single-relay full-duplex network, we derive outage-optimal quantizers under various levels of limited CSI. Numerical evaluations demonstrate finite-SNR gains (over the baseline noise-level scheme) ranging from $3$ dB for CSIR settings, to $6$ dB for global CSI-aware relays. For the half-duplex relay, we \emph{jointly} optimize the quantizer and relay schedule under similar settings. The results were partially presented in our conference paper \cite{SWF_Allerton}.

\item[(ii)] For the $N$-relay diamond network, we show that a topology-aware universal quantizer can reduce the additive approximation gap of QMF from the conventional $\Theta(N)$ \cite{AvestimehrDiggavi_09} \cite{LimKim_10} to within $\Theta(\log(N))$ bits/s/Hz, improving the performance of QMF in slow fading networks when no CSI is available. We also derive analytical solutions for channel-aware optimal quantizers for the $2$-relay and symmetric $N$-relay networks when global CSI is available at the relays. These results were presented in part in our conference paper \cite{SWF_ITW}\footnote{A version of the $\Theta(\log(N))$ gap result was also presented in \cite{Chern_logN} independently at the same conference as our work \cite{SWF_ITW}}.

\item[(iii)] In the CSIR settings, we prove that suitable hybridizations of optimized QMF with DF (Dynamic DF for half-duplex) universally outperform %(universally over all channel configurations)
the outage performance of the individual schemes for the single relay network. For the diamond network, hybridization is also shown to improve the performance of QMF, with simulations demonstrating a $5$ dB gain over QMF alone, and a gain in diversity order over DF in asymmetric diamond networks. %settings for the $2$-relay network.
\end{itemize}

\subsection*{Related Work}
%\textcolor{blue}{TO BE FILLED}
Cooperative relaying over slow fading networks have been extensively studied in the literature. For full-duplex networks, the optimal DMT is found for the single relay channel \cite{YukselErkip_07} and then for arbitrary relay networks \cite{AvestimehrDiggavi_09}. For half-duplex networks, \cite{Azarian_05} proposed the DDF scheme, which is later slightly improved by superposition coding in \cite{PrasadVaranasi_10}. In \cite{PawarAvestimehr_08}, it is shown that QMF relaying achieves the optimal DMT for the single-antenna relay channel. For finite-SNR outage performance, most works in the literature were focused on DF and AF, eg., \cite{BletsasShin_07} --  \cite{YuLee_12}.
The outage performance of QMF over the half-duplex relay channel is recently studied in \cite{YaoKim_13} but the optimization of schedules and performance of dynamic relaying protocols are not further pursued. The idea of hybridization of QMF and DF has been reported in an independent archive submission \cite{HouKramer_13}, where the treatment is limited to full-duplex relay networks.

The paper is organized as follows: Section \ref{sec:model_and_metrics} details the communication model for the single relay, as well as the $N$-relay diamond network and also outlines the metrics that are used to demonstrate the performance benefits of our schemes. Section \ref{sec:relaying_schemes_and_bounds} collects the mathematical expressions for the achievable QMF rates with Gaussian vector quantizers, as well as those for DF, DDF and cutset bound expressions that will be used throughout the paper. Section \ref{sec:full_duplex} details the limited-CSI QMF optimizations for the full-duplex single relay network, while Section \ref{sec:half_duplex} deals with the more involved joint optimization problem for the half-duplex variant. In Section \ref{sec:multiple_relays}, we focus on the optimization choices for the $N$-relay diamond network. Each of the sections \ref{sec:full_duplex}, \ref{sec:half_duplex} and \ref{sec:multiple_relays} contain numerical evaluations that provide a visual insight into our results.

\section{Model and Metrics}
\label{sec:model_and_metrics}

\begin{figure}[!h]
\centering
\subfigure[The single relay network]{
\includegraphics[width=0.5\columnwidth]{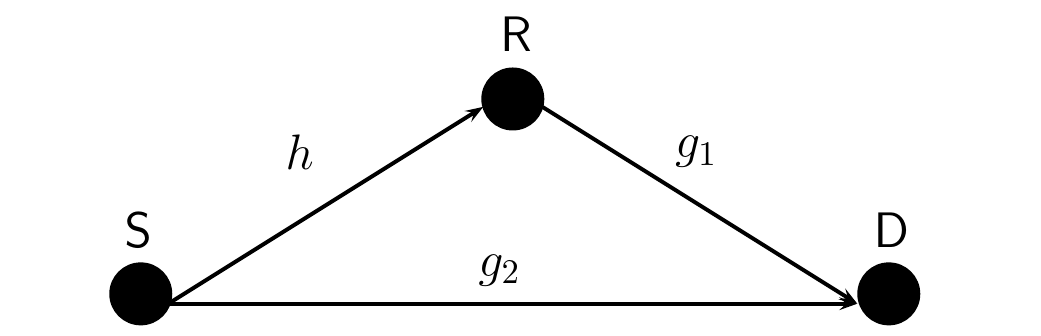}
\label{fig:single_relay}
}
\subfigure[The $N$-Relay Diamond Network with a partition $\Omega = \{1,2\}$ on relays]{
\includegraphics[width=0.4\columnwidth]{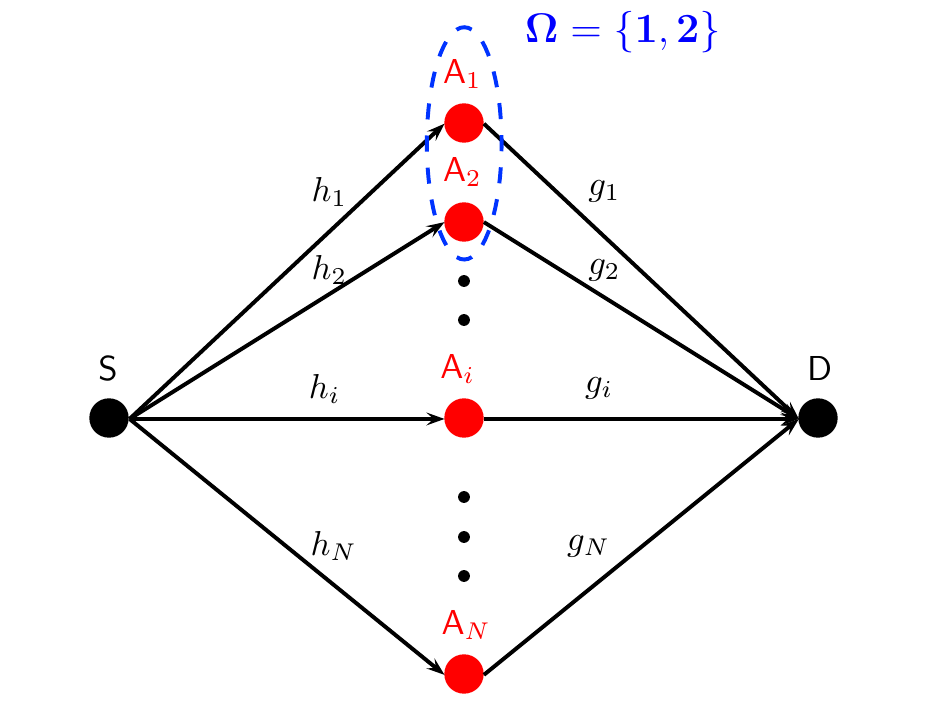}
\label{N_Relay_Diamond}
}
\label{fig:models}
\end{figure}

\subsection{Single Relay Network}

The single relay channel is depicted in Fig. \ref{fig:single_relay}. The source $\mathsf{S}$ communicates with the destination $\mathsf{D}$ with the help of a relay $\mathsf{R}$ that is capable of causal signal processing. The signals transmitted by the source and the relay are denoted by $X$ and $X_r$ respectively. The received signal at the destination and the relay are denoted by $Y$ and $Y_r$ respectively.

In the full-duplex case, the received signals as a function of the transmitted signals are as follows: for $t\in\{1,2,\ldots, T\}$ where $T$ is the communication block length,
\begin{align}\label{eq_Channel}
%\begin{array}{l}
   Y_r[t] = \msf{h} X[t] + Z_r[t],\
   Y[t] = \msf{g}_1X_r[t] + \msf{g}_2 X[t]  + Z[t].	
%\end{array}
\end{align}
The complex channel coefficient from the source to relay is denoted by $\msf{h}$ and those from relay to destination and source to destination are denoted by $\msf{g}_1$ and $\msf{g}_2$ respectively. Additive white Gaussian noises $Z$ and $Z_r$ are i.i.d $\mathcal{CN}(0,1)$ random variables. The transmitted signals are normalized to have an average power constraint of unity at the source and the relay, i.e., $\mathbb{E}(|X|^2) \leq 1$ and $\mathbb{E}(|X_r|^2) \leq 1$. For notational convenience, denote the amplitude of the channel coefficients as follows: $h:=|\msf{h}|, g_1:=|\msf{g}_1|, g_2:=|\msf{g}_2|$.

In the half-duplex scenario, the relay listens for a fraction $f \in [0,1]$ of the total communication block length $N$ and transmits for the fraction $(1-f)$. Without loss of generality, we assume that the relay is in the listening phase during the first $\lfloor fT \rceil$ symbols times ($\lfloor \cdot \rceil$ denotes the closest rounded integer) and in the transmitting phase during the rest. The received signals as a function of the transmitted signals are defined as in \eqref{eq_Channel} except that due to the half-duplex constraint, $X_r[t] = 0$ for $t\in\{1,2,\ldots, \lfloor fT\rceil \}$ when the relay is listening, and for the remaining time $Y_r[t] = 0$, $t\in\{\lfloor fT\rceil +1,\ldots,T\}$.

\subsection{Diamond Networks}

The $N$-Relay diamond network we consider is shown in Fig.~\ref{N_Relay_Diamond}. In this model, the source $\msf{S}$ communicates with the destination $\msf{D}$ via $N$ relay nodes capable of causal signal processing. The signals transmitted by the source $\msf{S}$ and the relay nodes $\msf{A}_i \in \{\msf{A}_1, \msf{A}_2, \ldots, \msf{A}_N\}$ are denoted by $X$ and $X_i$. The received signal at the destination and the relay nodes are denoted by $Y$ and $Y_i$, $i \in \{1,2,\ldots,N\}$. The received signals as a function of the transmitted signals are as follows:
$$
   Y_i = \msf{h}_i X + Z_i,\
   Y = \sum_{i=1}^{N} \msf{g}_iX_i + Z	
$$
where $\msf{h}_i$ represents the complex channel coefficient between $\msf{S}$ and $\msf{A}_i$ and $\msf{g}_i$ denotes that between $\msf{A}_i$ and $\msf{D}$. $Z_i$, $i \in \{1,2,\ldots,N\}$ and $Z$ are i.i.d $\mathcal{CN}(0,1)$ random variables. The transmitted signals are normalized to have an average power constraint of unity at the source and the relays, i.e., $\mathbb{E}(|X|^2) \leq 1$ and $\mathbb{E}(|X_i|^2) \leq 1$, $i \in \{1,2,\ldots,N\}$. Again, for notational convenience, we let $h_i:=|\msf{h}_i|, g_i:=|\msf{g}_i|$ for $i=\{1,2,\ldots,N\}$.

\subsection{Outage Performance in Slow Fading Environment}
In the slow fading scenario, the block-fading channel gains are Rayleigh distributed and are independent from link to link, and across blocks.
%$$
%\msf{h}\sim \mcal{CN}(0,1/\rho),\ \msf{g}_1\sim \mcal{CN}(0,1/\lambda_1),\ \msf{g}_2\sim \mcal{CN}(0,1/\lambda_2)
%$$
%where $1/\rho,1/\lambda_1,1/\lambda_2$ denote the signal-to-noise ratios in the Source-Relay link, the Relay-Destination link, and the Source-Destination link respectively. See Fig.~\ref{fig:single_relay} for illustration.

The outage probability $P_\Out(R)$ is the probability that the system cannot support the transmission rate $R$ from the source. From the static channel capacity bounds, we have the following lower and upper bounds on the outage probability:
\begin{align*}
&P_\Out(R) \ge P_{\Out,\cut}(R) :=
 \Pr\lbp R > \ol{C}_\cut \rbp \\
&P_\Out(R) \le P_{\Out,\ACH}(R) :=
 \Pr\lbp R > R_{\ACH} \rbp
\end{align*}
\normalsize
where $\ol{C}_\cut$ denotes the information-theoretic cutset upper bound on network capacity and ACH denotes any achievable scheme, like QMF, DF or DDF.

It is shown as a corollary in \cite{AvestimehrDiggavi_09} that QMF achieves the optimal DMT \cite{ZhengTse_03} for arbitrary full-duplex relay networks.
In \cite{PawarAvestimehr_08}, it was shown that even for the single-relay half-duplex channel, QMF relaying with a noise-level quantizer and a fixed $1/2-1/2$ schedule achieves the optimal DMT. At finite SNR however, one can sharpen the upper bound on the outage probability by optimizing the quantization distortion level (and the schedule $f$, for half-duplex networks) at the relay depending on available CSI.

To characterize the finite SNR outage performance corresponding to different \emph{multiplexing gains} on the DMT curve, we demonstrate the performance of our schemes by plotting the outage probabilities as a function of the SNR of one of the links (the SNRs of the other links scale in proportion to the configuration considered) for different rate scalings, i.e, we plot $P_\Out(R)$ vs SNR, where instead of a fixed rate $R$, we use $R = r\log(\mbox{SNR})$ where $r$ can be thought of as the multiplexing gain. This method of representation allows us to take \emph{finite SNR snapshots} of the DMT for varying multiplexing gains. Also, such a representation serves as a good guideline for most wireless systems that adapt their transmission rates according to channel conditions.

\subsection{Limited CSI constraints}

In this paper, we are primarily concerned with two settings of available CSI:
\begin{itemize}
\item [(i)] \emph{Global CSI}: In this setting, the relay has access to the \emph{magnitudes} of the fading coefficients in the entire network but it does not require the phases of its forward channels in order to carry out optimizations on QMF. This piece of information is available in many wireless systems due to reciprocity of channel magnitudes. However, the relay does need phase knowledge of its incoming link in order to generate the quantization codebooks for instance.
\item [(ii)] \emph{CSIR}: when only receiver CSI (CSIR) is available at the relay (both magnitude and phase information), as is the most practical scenario for typical wireless systems.
\end{itemize}

We derive QMF optimizations and optimal combinations with other schemes for both settings for the single relay network as well as the diamond network. For the single relay network, we consider both the full-duplex as well as the half-duplex problem, whereas for the diamond network, we consider only the full-duplex problem.

While arriving at the CSIR-optimal relaying schemes, we also provide optimizations for the single relay network where the relay has access to \emph{local} CSI, i.e., the instantaneous knowledge of its incoming and outgoing channels.

\section{Rate Expressions and Upper Bounds}
\label{sec:relaying_schemes_and_bounds}

To consolidate the development of our optimization framework, in this section we provide the expressions of the information-theoretic cutset upper bound and achievable rates of the QMF relaying scheme as well as other relaying schemes (DF and DDF) considered in this paper. The explicit derivation of these expressions follows straightforwardly from the literature \cite{AvestimehrDiggavi_09} \cite{LimKim_10} \cite{CoverElGamal_79}. For notational convenience, in the rest of this paper, logarithm is of base $2$.

\subsection{Single Relay Network}

\subsubsection{QMF Achievable Rates}

The achievable rate for QMF relaying is given by the following single-letter characterization: for a given probability distribution $p(X)p(X_r)p(\what{Y}_r|Y_r,X_r)$,
\begin{align*}
&R_{\QMF} = \min\lbp\begin{array}{l} I\lp X,X_r; Y\rp - I\lp Y_r;\what{Y}_r|X,X_r,Y\rp,\\
I\lp X;\what{Y}_r,Y|X_r\rp\end{array}\rbp.
\end{align*}
Here $\what{Y}_r$ denotes the quantizer output of the relay. We pick $X\sim\mcal{CN}(0,1), X_r\sim\mcal{CN}(0,1)$ and a Gaussian vector quantization codebook generated by the single-letter probability distribution
\begin{align}\label{eq_GaussianVQ}
\what{Y}_r = Y_r + \what{Z}_r,\ \what{Z}_r\sim \mcal{CN}\lp0,\Delta \rp,
\end{align}
independent of everything else. The parameter $\Delta$ determines the \emph{quantization distortion level}: the larger $\Delta$ is, the coarser the quantization is \cite{LimKim_10}.
With this choice, the QMF achievable rates are as follows:

{\flushleft \textbf{Full-Duplex:}}
$
R_{\QMF}(\Delta) = \left\lceil \min \lbp I_1, I_2 \rbp \right\rceil^+
$, where
\begin{align}\label{eq:rate_FD_single}
\begin{split}
I_1(\Delta)&=\log \lp 1 + \frac{h^2}{1 + \Delta} + g_2^2 \rp\\
I_2(\Delta)&=\log \lp 1 + g_1^2 + g_2^2 \rp - \log \lp \frac{1 + \Delta}{\Delta} \rp
\end{split}
\end{align}

{\flushleft \textbf{Half-Duplex:}}
$
R_{hd,\QMF}(\Delta; f) = \left\lceil \min \lbp I_{hd,1}, I_{hd,2} \rbp \right\rceil^+
$, where
\begin{align}\label{eq:rate_HD_single}
\begin{split}
I_{hd,1}(\Delta,f) &=f\log \lp 1 + \frac{h^2}{1 + \Delta} + g_2^2 \rp\\
&\quad + \lp 1 - f \rp \log \lp 1 + g_2^2 \rp\\
I_{hd,2}(\Delta,f) &= \lp 1 - f \rp \lp \log \lp 1 + g_1^2 + g_2^2 \rp \rp \\
&\quad + f\lbp \log \lp 1 + g_2^2 \rp - \log \lp \frac{1 + \Delta}{\Delta} \rp \rbp
\end{split}
\end{align}

\subsubsection{DF and DDF Achievable Rates}
{\flushleft \textbf{Full-Duplex:}}
\begin{align*}
R_{\DF} = \max \lbp \log \lp 1 + g_2^2 \rp, \min \lbp\begin{array}{l} \log \lp 1 + h^2 \rp, \\ \log \lp 1 + g_1^2 + g_2^2 \rp \end{array}\rbp \rbp
\end{align*}
DF is also DMT optimal for the single relay full-duplex network, and the DF achievable rate is within 1 bit/sec/Hz of the cutset bound. It is to be noted that in the expression above, we assume that the relays do not know the phases of the forward channels and hence, in the last term there is no coherent combining gain. This assumption is true throughout this paper. Also note that we shall use point-to-point transmission if $h^2 < g_2^2$.

{\flushleft \textbf{Half-Duplex:}}
For the half-duplex network, the dynamic decode-forward (DDF) protocol \cite{Azarian_05} is the appropriate variant of DF. In this protocol, the relay adjusts its schedule (the fraction of time it listens) in accordance with the received channel strength to decide how long it needs to listen for it to be able to decode, or whether at all it can decode the transmission from the source. The achievable rate as a function of the relay schedule is given by
\begin{align*}
&R_{\DDF}(f) = \lbp \log \lp 1 + g_2^2 \rp, R_1(f)\rbp,\ \text{where}\\
&R_1(f) =
\min \lbp\begin{array}{l} f\log \lp 1 + h^2 \rp, \\
\lp 1 - f \rp \log \lp 1 + g_1^2 + g_2^2 \rp + f \log \lp 1 + g_2^2 \rp \end{array}\rbp
\end{align*}
The optimal rate achievable (over all possible schedules) is given by:
$
R^*_{\DDF} = \max_{f\in[0,1]} R_{\DDF}(f)
$.

It is important to note that for the half-duplex network, none of the schemes (including DDF) other than QMF is DMT optimal for the entire range of multiplexing gains \cite{PawarAvestimehr_08},

\subsubsection{Upper Bounds on Capacity}

The cutset upper bound on the capacity of the full-duplex single relay channel is given by $\ol{C}_\cut = \min\lbp C_1, C_2 \rbp$ where
\begin{align*}
C_1 = \log \lp 1 + h^2 + g_2^2 \rp,\
C_2 =  \log \lp 1 + \lp g_1 + g_2 \rp^2 \rp.
\end{align*}

For the half-duplex case, the cutset bound expression for a given schedule $f$ is given by
$\ol{C}_{hd,\cut}(f)= \min\lbp C_{hd,1}(f), C_{hd,2}(f) \rbp$ where
\begin{align}\label{eq:cutset_HD}
\begin{split}
C_{hd,1}(f) &= f\log \lp 1 + h^2 + g_2^2 \rp + \lp 1 - f \rp \log \lp 1 + g_2^2 \rp, \\
C_{hd,2}(f) & = f\log \lp 1 + g_2^2 \rp + \lp 1 - f \rp \log \lp 1 + \lp g_1 + g_2 \rp^2 \rp.
\end{split}
\end{align}

\subsection{Diamond Networks}

We denote by $\Omega$, a partition of the index set $[1:N]:=\{1,2,\ldots,N\}$ of the relay nodes $\{\msf{A}_1,\ldots,\msf{A}_N\}$.
%and $\ol{\Omega}:=\Omega\cup\{\msf{S}\}$.
Therefore, for any $\Omega\subseteq [1:N]$, $\{\msf{S}\}\cup\{\msf{A}_i : i\in \Omega\}$ is a \emph{cut} of the network (see Fig.~\ref{N_Relay_Diamond} for illustration). Moreover, let $X_\Omega := \{X_i: i\in\Omega\}$.

\subsubsection{QMF Achievable Rates}
The achievable rate for QMF relaying over the $N$-relay diamond network is evaluated using the single-letter characterization\footnote{The Gaussian version was proved in \cite{Lattice_QMF} using lattice vector quantizers.} in \cite{LimKim_10}.
\begin{align*}
&R_{\QMF} = \max\min_{\Omega}\lbp \begin{array}{l}I\lp X,X_{\Omega}; \what{Y}_{\Omega^c},Y|X_{\Omega^c}\rp\\
-I\lp Y_{\Omega};\what{Y}_{\Omega}|X,X_{[1:N]},\what{Y}_{\Omega^c},Y\rp\end{array}\rbp
\end{align*}
where the maximum is taken over all probability distributions $p(X){\prod_{i\in[1:N]} p\lp X_i\rp p\lp\what{Y}_i|Y_i,X_i\rp}$. Here $\what{Y}_{i}$ denotes the quantizer output of relay $\msf{A}_i$. Generating the Gaussian vector quantization codebooks with the following independent single-letter probability distributions
$
\what{Y}_{i} = Y_i + \what{Z}_i,\ \what{Z}_i\sim \mcal{CN}\lp0,\Delta_i\rp,\ i=1,\ldots, N
$,
the above achievable rate is evaluated as
\begin{align*}
R_{\QMF}\lp  \Delta_{[1:N]}\rp = \min_{\Omega}\lbp R\lp\Omega; \Delta_{[1:N]}\rp\rbp
\end{align*}
where $\Delta_J := \{\Delta_{j}|\ j\in J\}$, and
$R\lp\Omega;\Delta_{[1:N]}\rp$ is as follows:
\begin{align}
&R\lp\Omega; \Delta_{[1:N]}\rp \notag\\
=& \lb\begin{array}{l}\log\lp 1+\sum_{i\in\Omega}|g_i|^2\rp + \log\lp1+\sum_{j\in\Omega^c}\frac{|h_j|^2}{1+\Delta_{j}}\rp\\
- \sum_{i\in\Omega}\log\lp\frac{1+\Delta_i}{\Delta_i}\rp\end{array}\rb^+ \label{eq_QMFRate}
\end{align}

\subsubsection{DF Achievable Rates}
The DF achievable rate is given by:
\begin{align*}
R_{DF} = \max_{\Omega}\lbp \min\lbp \log \lp 1 + \sum_{i \in \Omega} g_i^2 \rp, \min_{i \in \Omega} \log\lp 1 + h_i^2\rp \rbp \rbp
\end{align*}
\subsubsection{Upper Bounds on Capacity}

The cutset upper bound on the capacity of the network is given by \eqref{eq_MinCut} in the following.
\begin{align}
\label{eq_MinCut}&\overline{C}_{\cut} \le
\min_{\Omega}\left\{\log\lp 1+{\left(\sum_{i\in\Omega}g_i\right)}^2\rp + \log\lp1+\sum_{j\in\Omega^c}h_j^2\rp\right\}
\end{align}

\section{Full-Duplex Single Relay Network}
\label{sec:full_duplex}

In this section we focus on the full-duplex single-relay network.%, that is, the \emph{relay channel}.
We study the ways in which the full-duplex relay can utilize the available CSI to optimize and improve the performance of QMF in slow fading scenarios.
As mentioned in Section~\ref{sec:relaying_schemes_and_bounds}, we focus on QMF with Gaussian random codebooks and Gaussian vector quantizers as shown in \eqref{eq_GaussianVQ}.
Hence, the optimization parameter that the relay can choose is the quantization level $\Delta$ used in \eqref{eq_GaussianVQ}.

We assume that the relay knows the probability distributions of the three links $\{\msf{h},\msf{g}_1,\msf{g}_2\}$, but has limited access to the realizations of them in each transmission block. Recall that the channels are Rayleigh-faded and are independent from link to link, and across blocks:
$$
\msf{h}\sim \mcal{CN}(0,1/\rho),\ \msf{g}_1\sim \mcal{CN}(0,1/\lambda_1),\ \msf{g}_2\sim \mcal{CN}(0,1/\lambda_2)
$$
where $1/\rho,1/\lambda_1,1/\lambda_2$ denote the signal-to-noise ratios in the Source-Relay link, the Relay-Destination link, and the Source-Destination link respectively.

In the rest of this section, we study the following CSI scenarios:
\begin{itemize}
\item The relay has global CSI, i.e., the instantaneous realizations of $\msf{h}, g_1$ and $g_2$.
\item The relay has local CSI, that is, the instantaneous realizations of $\msf{h}$ and $g_1$.
\item The relay has CSI at the receiver (CSIR) only, that is the realization of $\msf{h}$.
\end{itemize}

In the global CSI setting, the optimization problem is equivalent to \emph{rate maximization}. In the other cases, we formulate the objective function as the outage probability, \emph{conditioned} on the known channel gain realizations at the relay. For the CSIR-limited case, we also prove that a hybridization of QMF (using optimized quantizers) with DF provides a strictly better outage performance than either one of the two schemes for all channel configurations.

\subsection{Global CSI at the Relay}

When the relay has access to the instantaneous channel gain realizations $\msf{h}, g_1$ and $g_2$, it will choose a quantizer distortion $\Delta^*$ that maximizes $R_{\QMF}(\Delta)$. Hence using ~\eqref{eq:rate_FD_single}, the optimization problem can be stated as a $\max\min$ problem as follows:
\begin{align*}
R_{\text{QMF}}^*=\max_{\Delta > 0}\lbp \left\lceil \min \lbp I_1, I_2 \rbp \right\rceil^+ \rbp
\end{align*} and the optimizing $\Delta$ is denoted by $\Delta^*$.

We note that $I_1$ is \emph{monotonically decreasing} in $\Delta$ and $I_2$ is \emph{monotonically increasing} in $\Delta$. Also, at the boundary points of the admissible range of $\Delta$, we have,
\begin{align*}
I_1(0) = \lim_{\Delta \rightarrow 0^+} I_1&=\log \lp 1 + h^2 + g_2^2 \rp\\
I_1(\infty) = \lim_{\Delta \rightarrow \infty} I_1&=\log \lp 1 + g_2^2 \rp\\
I_2(0) = \lim_{\Delta \rightarrow 0^+} I_2&=-\infty\\
I_2(\infty) = \lim_{\Delta \rightarrow \infty} I_2&=\log \lp 1 + g_1^2 + g_2^2 \rp
\end{align*}

The properties $I_1 > 0 \quad \forall \Delta>0$, $I_2(0) < I_1(0)$ and $I_2(\infty) > I_1(\infty)$, together with the above mentioned monotonicity properties ensure that the optimum quantizer distortion is given by
\begin{align*}
\Delta^* &= \arg_{\Delta}\lbp I_1 = I_2 \rbp = \frac{1 + h^2 + g_2^2}{g_1^2}
\end{align*}

\subsection{Local CSI at the Relay}

The goal here is to find a quantizer distortion based on the available channel knowledge that minimizes the conditional probability of outage at a given rate $R$, denoted by $P_{\Out|h,g_1}(R; \Delta)$.
Since the fading channel coefficients $\{\msf{h}, \msf{g}_1,\msf{g}_2\}$ are mutually independent, we have
\begin{align*}
&1 - P_{\Out|h,g_1}(R; \Delta) =\Pr \lbp R \leq R_{\QMF}(\Delta) \rbp \\
%&=\Pr \lbp R \leq \left\lceil \min \lbp I_1, I_2 \rbp \right\rceil^+ \rbp \\
&=\Pr \lbp \begin{array}{l}R \leq \log \lp 1 + \frac{h^2}{1 + \Delta} + g_2^2 \rp , \\ R \leq \left\lceil \log \lp 1 + g_1^2 + g_2^2 \rp - \log \lp \frac{1 + \Delta}{\Delta} \rp \right\rceil^+ \end{array}\rbp \\
& = \Pr \lbp \begin{array}{l}g_2^2 \geq \left\lceil \underbrace{2^R - \frac{h^2}{1 + \Delta} - 1}_{\beta_1} \right\rceil^+, \\ g_2^2 \geq \left\lceil \underbrace{2^R \lp \frac{1 + \Delta}{\Delta} \rp - g_1^2 - 1}_{\beta_2}\right\rceil^+ \end{array} \rbp\\
& = \Pr \lbp g_2^2 \geq \max \lbp \left\lceil \beta_1 \right\rceil^+, \left\lceil \beta_2 \right\rceil^+ \rbp \rbp \\
& = e^{-\lambda_2 \max \lbp \left\lceil \beta_1 \right\rceil^+, \left\lceil \beta_2 \right\rceil^+ \rbp}
\end{align*}

The problem of finding the optimal $\Delta^*$ that minimizes $P_{\Out|h,g_1}(R; \Delta)$ can then be stated as
\begin{align*}
\Delta^* &= \arg\max_{\Delta > 0} e^{-\lambda_2 \max \lbp \left\lceil \beta_1 \right\rceil^+, \left\lceil \beta_2 \right\rceil^+ \rbp} \\
&= \arg\min_{\Delta > 0}\max \lbp \left\lceil \beta_1 \right\rceil^+, \left\lceil \beta_2 \right\rceil^+ \rbp
\end{align*}

Now, $\left\lceil \beta_1 \right\rceil^+$ is \emph{non-decreasing} in $\Delta$ and $\left\lceil \beta_2 \right\rceil^+$ is \emph{non-increasing} in $\Delta$. The properties that $\left\lceil \beta_2 \right\rceil^+(0) > \left\lceil \beta_1 \right\rceil^+(0)$ and $\left\lceil \beta_2 \right\rceil^+(\infty) < \left\lceil \beta_1 \right\rceil^+(\infty)$, together with the monotonicity conditions ensure that the optimum quantizer distortion is given by
\begin{align*}
\Delta^* &= \arg_{\Delta > 0} \lbp \beta_1 = \beta_2 \rbp \\
& = \arg_{\Delta > 0} \lbp g_1^2 \Delta^2 + \lp g_1^2 - h^2 - 2^R \rp \Delta - 2^R = 0 \rbp \\
& = \frac{\sqrt{\lp g_1^2 - h^2 - 2^R \rp^2 + 4g_1^2 2^R} - \lp g_1^2 - h^2 - 2^R \rp}{2g_1^2}
\end{align*}

\subsection{CSIR at the Relay}
To obtain an outage-optimal quantizer in this setting, the metric to minimize is again the conditional probability of outage, but this time conditioned \emph{only} on $h$. Proceeding in a similar fashion as in the local CSI case, we have,
\begin{align*}
Q(\Delta) &\triangleq 1 - P_{\mbox{out}|h}(R; \Delta)\\
& = \Pr \lbp \begin{array}{l} g_2^2 \geq \left\lceil \underbrace{2^R - \frac{h^2}{1 + \Delta} - 1}_{\alpha_1} \right\rceil^+, \\ g_1^2 + g_2^2 \geq \underbrace{2^R \lp \frac{1 + \Delta}{\Delta} \rp - 1}_{\alpha_2} \end{array} \rbp
\end{align*}
We note that in contrast to the local CSI setting where only $g_2^2$ was treated as a random variable, both $g_1^2$ and $g_2^2$ are to be treated as random variables in the optimization for the CSIR setting. With regards to $\alpha_1$ and $\alpha_2$, note that $\alpha_2 > \alpha_1$ and $\left\lceil \alpha_1 \right\rceil^+ = \alpha_1$ for $\Delta > \Delta_t = \frac{h^2}{2^R - 1} - 1$.

\begin{figure}[!ht]
  \centering
  % Requires \usepackage{graphicx}
  \includegraphics[width=0.7\columnwidth]{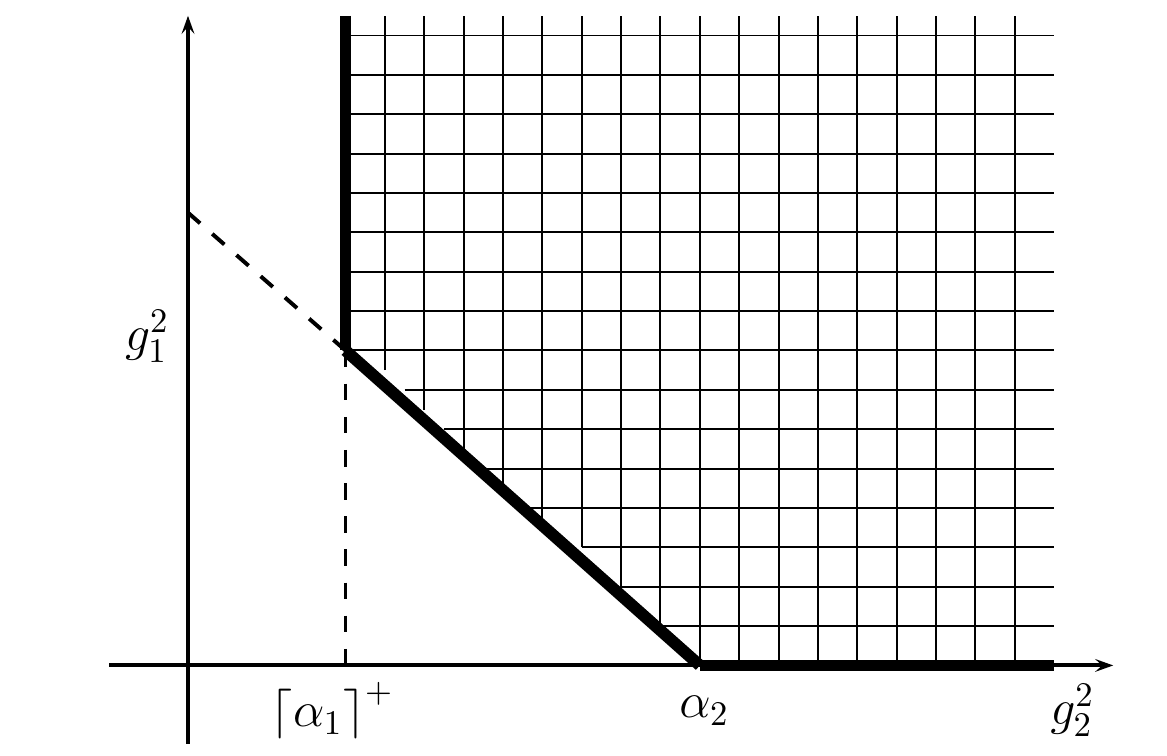}\\
  \caption{Illustration of the integration region for computing $Q(\Delta)$}\label{fig:region_CSIR_fd}
\end{figure}

By integrating the joint density of the two independent exponentially distributed random variables corresponding to $g_1^2$ and $g_2^2$ over the shaded region in the $2$-D plane (as shown in Fig. \ref{fig:region_CSIR_fd}), the above probability is computed as follows
\begin{align*}
&Q(\Delta) = \\
&\begin{cases}
 \frac{\lambda_2}{\lambda_2 - \lambda_1}e^{-\lp \lambda_1 \alpha_2 + (\lambda_2-\lambda_1) \left\lceil \alpha_1 \right\rceil^+ \rp} - \frac{\lambda_1}{\lambda_2 - \lambda_1}e^{- \lp \lambda_2 \alpha_2 \rp} \quad \lambda_1 \neq \lambda_2\\
 e^{-\lambda_1 \alpha_2}\lp 1 + \lambda_1 \alpha_2 - \lambda_1 \left\lceil \alpha_1 \right\rceil^+ \rp \quad \lambda_1 =\lambda_2
\end{cases}
\end{align*}

Hence the problem of choosing the optimal quantizer distortion can be stated as follows:
$$
\Delta^* = \arg\max_{\Delta > 0} Q(\Delta).
$$

For simplicity, we focus on the case where $\lambda_1 = \lambda_2$. The case where $\lambda_1\ne\lambda_2$ is treated in Appendix~\ref{app_asym}. When $\lambda_1 = \lambda_2$,
\begin{equation*}
Q'(\Delta) = \lambda_1 \lp \left\lceil \alpha_1 \right\rceil^+ - \alpha_2 \rp \alpha_2' - {\left\lceil \alpha_1 \right\rceil^+}'
\end{equation*}where all derivatives are with respect to $\Delta$.

To derive the optimal $\Delta^*$, we first assume that $\alpha_1 > 0$ and solve the optimization problem. With this assumption, we note that $Q'(0) = +\infty$, i.e $Q(\Delta)$ is increasing at $\Delta = 0$. Also, the solution to $Q'(\Delta) = 0$ is given by the solution of a cubic equation with \emph{exactly one} positive root (this can easily be seen from the Descartes' sign scheme), which we denote by $\Delta^{\dagger}$
\begin{align*}
\Delta^{\dagger} &= \arg_{\Delta>0}\lbp \begin{array}{l} \lp \frac{h^2}{\lambda_1} \rp\Delta^3 - \lp 2^R \lp 2^R + h^2 \rp \rp \Delta^2 \\ - \lp 2^R \lp 2^{R+1} + h^2 \rp \rp \Delta - \lp 2^{2R} \rp = 0 \end{array}\rbp
\end{align*}where $\Delta^{\dagger}$ can be evaluated analytically using the properties of cubic equations.

Since $Q'(0) = +\infty$, and $\Delta^{\dagger}$ is the \emph{only} critical point of $Q(\Delta)$, the maximizing $\Delta^* = \Delta^{\dagger}$, provided our initial assumption of $\alpha_1 > 0$ is correct, which translates to $\Delta^{\dagger} > \Delta_t = \frac{h^2}{2^R - 1} - 1$ being satisfied.

If however, $\Delta^{\dagger} \leq \Delta_t$, our initial assumption is invalid, and we set $\left\lceil \alpha_1 \right\rceil^+ = 0$ in the $Q(\Delta)$ expression. We also note that in this case, the maximizing $\Delta^*$ will lie in the interval $\lp 0, \Delta_t \rb $. In such a case, $Q'(\Delta) = -\lambda_1\alpha_2\alpha_2' > 0 \quad \forall \Delta \in \lp 0,\Delta_t \rb $, i.e., $Q(\Delta)$ is monotonically increasing in $\Delta$. Hence, the maximizing $\Delta^* = \Delta_t$.

Thus, combining the above cases, we have,
\begin{equation*}
\Delta^* = \max\lbp \Delta^{\dagger}, \Delta_t \rbp \quad \mbox{for} \quad \lambda_1 = \lambda_2
\end{equation*}

\subsection{CSIR limited Hybrid DF/QMF}
\label{subsec:csir_hybrid_fd}

In this strategy, we let the relay perform a DF operation if it is able to decode the codeword sent from the source. However, if it cannot decode, it will go into QMF mode, whereby it will use the CSIR-optimal quantizer distortion to quantize the received signal and subsequently map and forward it to the destination. The relay will also transmit a 1-bit flag to the destination letting it know the relaying scheme it has used, i.e., DF or QMF. We next show that the outage performance of this hybrid scheme (denoted by HYB) is superior to the outage performance of both the DF and CSIR-optimal QMF schemes individually.

We write the outage probabilities at a given rate $R$ for the relaying schemes using the total probability theorem by conditioning on whether the relay can decode the source transmission. Thus, for the relaying schemes DF, QMF and HYB, we have:
\begin{align}
P_{\Out,\DF}(R) &= \begin{array}{l}\lbp\begin{array}{l}\Pr \lbp R > \log \lp 1 + g_1^2 + g_2^2 \rp \rbp. \\ \Pr \lbp R < \log \lp 1 + h^2 \rp \rbp \end{array}\rbp\\
+ \lbp\begin{array}{l}\Pr \lbp R > \log \lp 1 + g_2^2 \rp \rbp. \\ \Pr \lbp R \geq \log \lp 1 + h^2 \rp \rbp \end{array}\rbp \end{array}\label{eq:P_out_DF}\\
P_{\Out,\QMF}(R) &= \begin{array}{l}\lbp\begin{array}{l}\Pr \lbp R > R_{\QMF}(\Delta^*) \rbp. \\ \Pr \lbp R < \log \lp 1 + h^2 \rp \rbp \end{array}\rbp \\
+ \lbp\begin{array}{l}\Pr \lbp R > R_{\QMF}(\Delta^*) \rbp. \\ \Pr \lbp R \geq \log \lp 1 + h^2 \rp \rbp \end{array}\rbp \end{array} \label{eq:P_out_QMF}\\
P_{\Out,\HYB}(R) &= \begin{array}{l}\lbp\begin{array}{l}\Pr \lbp R > \log \lp 1 + g_1^2 + g_2^2 \rp \rbp. \\ \Pr \lbp R < \log \lp 1 + h^2 \rp \rbp \end{array}\rbp\\
+ \lbp\begin{array}{l}\Pr \lbp R > R_{\QMF}(\Delta^*) \rbp. \\ \Pr \lbp R \geq \log \lp 1 + h^2 \rp \rbp \end{array}\rbp \end{array}\label{eq:P_out_HYB}
\end{align}

An important point to note for the above expressions is that in terms involving $R_{\QMF}(\Delta^*)$, the probability must be computed by further conditioning on the source-to-relay channel $h$, and then using the total probability theorem. This is because in CSIR-optimized QMF, the distortion $\Delta^*$ depends on the channel realization $h$.

To show that $P_{\Out,\HYB}(R) < P_{\Out,\QMF}(R)$, from equations \eqref{eq:P_out_QMF} and \eqref{eq:P_out_HYB} we have to show:
\begin{align*}
\Pr \lbp R > \log \lp 1 + g_1^2 + g_2^2 \rp \rbp < \Pr \lbp R > R_{\QMF}(\Delta^*) \rbp
\end{align*}
where $\Delta^*$ is the CSIR-optimal quantizer distortion derived previously.

From the QMF rate expression in \eqref{eq:rate_FD_single}, we see that
\begin{align*}
&R_{\QMF}(\Delta^*) < \log \lp 1 + g_1^2 + g_2^2 \rp - \log \lp \frac{1 + \Delta^*}{\Delta^*} \rp\\
\implies &R_{\QMF}(\Delta^*) < \log \lp 1 + g_1^2 + g_2^2 \rp
%\Rightarrow& R_{\QMF}(\Delta^*) = \log \lp 1 + g_1^2 + g_2^2 \rp - \epsilon
\end{align*}
which proves that $P_{\Out,\HYB}(R) < P_{\Out,\QMF}(R)$.

Next, to show that $P_{\Out,\HYB}(R) < P_{\Out,\DF}(R)$, from equations \eqref{eq:P_out_DF} and \eqref{eq:P_out_HYB} we have to show:
\begin{align*}
\Pr \lbp R > R_{\QMF}(\Delta^*) \rbp < \Pr \lbp R > \log \lp 1 + g_2^2 \rp \rbp.
\end{align*}

From the definition of $\Delta^*$, the CSIR-optimal quantizer distortion, we know that it minimizes the outage probability $\Pr \lbp R > R_{\QMF}(\Delta) \rbp$ for every realization $h$ of the source to relay channel. Hence, if we pick any $\Delta = \Delta_1$ irrespective of $h$ (in other words, we pick the same quantizer distortion $\Delta_1$ for all realizations $h$ instead of using optimized distortions for every realization), we have:
$
\Pr \lbp R > R_{\QMF}(\Delta^*) \rbp < \Pr \lbp R > R_{\QMF}(\Delta_1) \rbp
$.
Picking $\Delta_1 = \infty $ gives us:
\begin{align*}
\Pr \lbp R > R_{\QMF}(\Delta^*) \rbp < \Pr \lbp R > \log \lp 1 + g_2^2 \rp \rbp
\end{align*}which proves that $P_{\Out,\HYB}(R) < P_{\Out,\DF}(R)$.

\subsection{Numerical Evaluation}

\begin{figure}[!h]
\centering
\subfigure[All channels are i.i.d. The rate scales as: $R = 0.3\log_2(\mathsf{SNR})$]{
\includegraphics[scale=0.4]{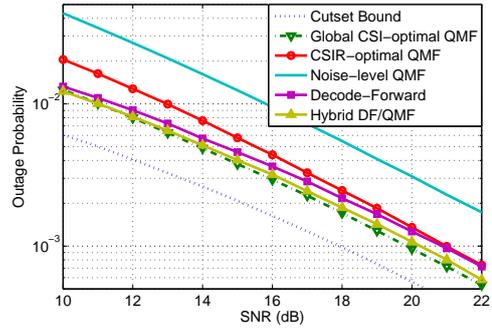}
\label{res:FD_0.3_iid}
}
\subfigure[$\mathbb{E}\{|g_1|^2\} = \mathbb{E}\{|g_2|^2\} = 10\mathbb{E}\{|h|^2\}$. $x$-axis : $\mathbb{E}\{|h|^2\}$. The rate scales as $R = 0.7\log_2(\mathsf{SNR})$
]{
\includegraphics[scale=0.4]{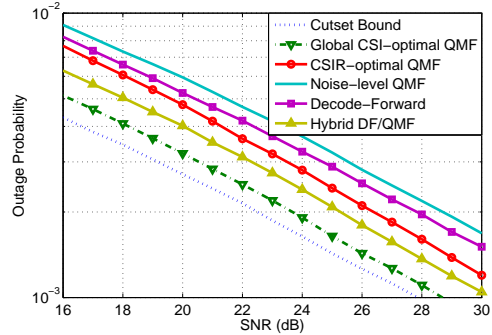}
\label{res:FD_0.7_weakStoR}
}
\caption[]{Outage Performance in Rayleigh faded channels over the full-duplex single-relay network}
\label{res:singlerelay_FD}
\end{figure}

In Fig. \ref{res:singlerelay_FD}, the outage performance of the relaying schemes are plotted for different rate scalings and channel configurations. In the setting shown in Fig. \ref{res:FD_0.3_iid} with i.i.d. channels and a multiplexing gain of $0.3$, the CSIR-optimized QMF is shown to provide a $3$ dB gain over the baseline noise-level scheme. The global CSI optimal QMF roughly offers an additional $2$ dB gain over the CSIR-optimal QMF. Similar optimization gains are also observed for the setting in Fig. \ref{res:FD_0.7_weakStoR} with a higher rate scaling and more asymmetric channels.

Another interesting point observed from simulations is that the performance of the \emph{local} CSI-optimal QMF is numerically indistinguishable from that of the global CSI variant for all channel configurations we tested. This possibly indicates that the $\msf{S}-\msf{D}$ direct link does not influence the performance of QMF significantly as far as relay parameter choices are concerned.

From Fig. \ref{res:FD_0.3_iid}, we also see that for i.i.d. channels, in CSIR limited settings, DF marginally outperforms CSIR-optimal QMF (by about $1$ dB), but the (CSIR limited) hybrid scheme outlined in section \ref{subsec:csir_hybrid_fd} outperforms both DF and CSIR-optimal QMF. For the setting shown in Fig. \ref{res:FD_0.7_weakStoR} on the other hand, where the source-to-relay link is weaker in comparison to the other links, CSIR-optimal QMF outperforms DF by about $2$ dB. The hybrid again outperforms them both, providing an additional $1$ dB benefit over CSIR-optimal QMF.

\section{Half-Duplex Single Relay network}
\label{sec:half_duplex}

To extend our framework to half-duplex QMF relaying, we have two parameters -- the schedule (fraction of listening time) $f$, and the quantizer distortion $\Delta$, that we have to \emph{jointly} optimize depending on the CSI available to the relay. This problem is significantly more involved than the full-duplex QMF problem. In fact it turns out that it is difficult to provide clean analytical characterizations for most of the CSI conditions that we have tackled in the paper.

\subsection{Global CSI at the Relay:}
\label{subsec:globalCSI_HD}
With global CSI, the equivalent rate-maximization problem can be stated using \eqref{eq:rate_HD_single} as:
\begin{align*}
R_{\QMF}^*=\max_{f \in (0,1);\Delta > 0}\lbp \left\lceil \min \lbp I_{hd,1}, I_{hd,2} \rbp \right\rceil^+ \rbp .
\end{align*}

From the expressions above, it is evident that
\begin{itemize}
\item $I_{hd,1}$ is \emph{monotonically increasing} in $f$ and \emph{monotonically decreasing} in $\Delta$
\item $I_{hd,2}$ is \emph{monotonically increasing} in $\Delta$
\end{itemize}

To proceed, let us first fix an $f$, and find the optimal $\Delta = \Delta^*(f)$ corresponding to that fixed value of $f$. The equivalent representation of the original optimization problem then becomes
\begin{align*}
&R_{\QMF}^*=\max_{f \in (0,1)}\lbp R_{\Delta}(f) \rbp,\\
&\text{where }R_{\Delta}(f) = \max_{\Delta > 0} \lbp \left\lceil \min \lbp I_{hd,1}, I_{hd,2} \rbp \right\rceil^+ \rbp \\
\Delta^*(f) &= \arg_{\Delta > 0} R_{\Delta}(f),\ \text{and}\
\Delta^*(f) = \arg_{\Delta > 0} R_{\Delta}(f).
\end{align*}

For the boundary values, we have
\begin{align*}
&I_{hd,1}(f, \Delta = 0)= f\log \lp 1 + h^2 + g_2^2 \rp + \lp 1 - f \rp \log \lp 1 + g_2^2 \rp\\
&I_{hd,1}(f, \Delta = \infty) = \log \lp 1 + g_2^2 \rp\\
&I_{hd,2}(f, \Delta = 0) = -\infty\\
&I_{hd,2}(f, \Delta = \infty) = \begin{array}{l}f\log \lp 1 + g_2^2 \rp \\+ \lp 1 - f \rp \log \lp 1 + g_1^2 + g_2^2 \rp \end{array}
\end{align*}

The fact that $I_{hd,1}(f,\Delta) > 0$ and $I_{hd,1}(f, \Delta = 0) > I_{hd,2}(f, \Delta = 0)$, coupled with the above mentioned monotonicity properties of $I_{hd,1}$ and $I_{hd,2}$ in the variable $\Delta$ ensure that for a given $f$, $\Delta^*(f) = \arg_{\Delta>0}\lbp I_{hd,1} = I_{hd,2} \rbp$. We can then run a search over $f \in (0,1)$ to maximize $R_{\Delta}(f)$. The maximising schedule $f = f^*$ and the distortion $\Delta^*(f^*)$ is the required optimal.

\subsection{CSIR at the Relay}

To obtain a jointly optimal quantizer and relay schedule in this setting, the metric to minimize is again the probability of outage, conditioned only on $h$. Proceeding similarly as in the full-duplex case, we have,
\begin{align*}
1 - P_{\Out|h}(R; f,\Delta)&= \Pr \lbp \lbp R <  I_{hd,1} \rbp \bigcap \lbp R <  I_{hd,2} \rbp \rbp.
\end{align*}
We note that both $g_1^2$ and $g_2^2$ are to be treated as random variables in the optimization for the CSIR setting. Since $I_{hd,2}$ now involves \emph{two} random variables, the event $\lbp \lbp R <  I_{hd,1} \rbp \bigcap \lbp R <  I_{hd,2} \rbp \rbp$, for a given $(f,\Delta)$ pair can equivalently be represented as $\lbp \lp g_1^2, g_2^2 \rp \in \mathcal{G}(f,\Delta) \rbp$ where $\mathcal{G}$ is an appropriate open region in the first quadrant of the $2$-D plane. The probability is computed numerically and the computation is sped up by taking advantage of the structure of the integration region (isolating rectangular components) and making use of the cumulative distribution function of exponential random variables. The optimal $(f,\Delta)$ pair is the one that minimizes $P_{\Out|h}(R; f,\Delta)$ thus obtained. One can show that for a given $f$, $P_{\Out|h}(R;f,\Delta)$ increases with $\Delta$ after a certain threshold $\Delta_{th}$. Also, from extensive numerical evaluations, we can conjecture (similar to that in the full-duplex CSIR case for asymmetric fades) that there is exactly one critical point of $P_{\Out|h}(R; f,\Delta)$ in $(0,\Delta_{th})$ which we can efficiently compute numerically. The optimal over $f$ can then be found by a sufficiently quantized search over the $(0,1)$ interval.

\subsection{CSIR-limited hybrid QMF/DDF}
\label{subsec:DDF/QMF}

Similar to the full-duplex case, we outline a hybrid strategy for the CSIR-limited half-duplex network that combines the benefits of the CSIR-optimal QMF and the dynamic decode forward (DDF) strategy proposed in \cite{Azarian_05}. The relay operation here is more involved than that of the full-duplex relay, as it has to jointly choose a schedule, a quantizer and a relaying scheme (DDF or QMF) using the information available. We next outline the relay operation for this scheme.

Given the received channel $\msf{h}$, the relay essentially runs a scan over all (sufficiently finely quantized) schedules $f \in [0,1]$, and for each such $f$, computes the conditional probability of outage. If a schedule $f$ permits a decoding operation (i.e., $R < f\log(1 + h^2)$), the outage probability corresponds to that of DDF, whereas, if it does not permit decoding, the outage probability corresponds to that of CSIR-optimized QMF. The schedule $f^*$ that minimizes the conditional outage probability thus obtained is the required schedule, and the relaying scheme to be used is the one that leads to this minimum outage probability.

However, one can observe that the properties of the rate expressions for DDF allow for a lower complexity equivalent operation at the relay. Consider the quantity $f_{\DDF} = \frac{R}{\log (1 + h^2)}$. If, for the given $h$, $f_{\DDF}$ is inadmissible (i.e., $f_{\DDF} > 1$), the relay will use CSIR-optimized QMF. For the case when $f_{\DDF}$ is admissible, the values of $f$ (in the scan at the relay) for which DDF will be considered to compute the probability of outage lie in the interval $[f_{\DDF}, 1]$. Observe that in this regime, the probability of outage is an increasing function of $f$, and hence one only needs to run the scan in the interval $[0, f_{\DDF}]$ with the outage probability corresponding to QMF for $f \in [0,f_{\DDF})$ and to DDF for $f = f_{\DDF}$.

We next prove that the outage performance of the hybrid scheme described above is superior to that of CSIR-optimized QMF as well as DDF schemes taken individually. To do so, we demonstrate that for every schedule $f \in [0,1]$, the outage probability of the hybrid scheme is smaller than that of DDF and CSIR-optimized QMF, which naturally translates to it being better when the optimizing $f$ is chosen. We denote by $P_{\mathsf{out}}^{(f)}$, the probability of outage at a particular schedule $f$. Writing out these probabilities for the schemes, we have:

{\small
\begin{align}
P_{\Out,\DDF}^{(f)}(R) &= \begin{array}{l}\lbp\begin{array}{l}\Pr \lbp R > \begin{array}{c}(1 - f) \log \lp 1 + g_1^2 + g_2^2 \rp \\ + f \log \lp 1 + g_2^2 \rp \end{array} \rbp \\. \Pr \lbp R < f \log \lp 1 + h^2 \rp \rbp\end{array}\rbp \\
+ \lbp\begin{array}{l}\Pr \lbp R > \log \lp 1 + g_2^2 \rp \rbp \\. \Pr \lbp R \geq f \log \lp 1 + h^2 \rp \rbp \end{array}\rbp\end{array} \label{eq:P_out_DDF}\\
P_{\Out,\QMF}^{(f)}(R) &= \begin{array}{l}\lbp\begin{array}{l}\Pr \lbp R > R_{hd,\QMF}(\Delta^*(f)) \rbp \\. \Pr \lbp R < f \log \lp 1 + h^2 \rp \rbp\end{array}\rbp \\
+ \lbp\begin{array}{l}\Pr \lbp R > R_{hd,\QMF}(\Delta^*(f)) \rbp \\. \Pr \lbp R \geq f \log \lp 1 + h^2 \rp \rbp \end{array}\rbp\end{array} \label{eq:P_out_QMF_hd}\\
P_{\Out,\HYB}^{(f)}(R) &= \begin{array}{l}\lbp\begin{array}{l}\Pr \lbp R > \begin{array}{c}(1 - f) \log \lp 1 + g_1^2 + g_2^2 \rp \\ + f \log \lp 1 + g_2^2 \rp \end{array} \rbp \\. \Pr \lbp R < f \log \lp 1 + h^2 \rp \rbp\end{array}\rbp \\
+ \lbp\begin{array}{l}\Pr \lbp R > R_{hd,\QMF}(\Delta^*(f)) \rbp \\. \Pr \lbp R \geq f \log \lp 1 + h^2 \rp \rbp \end{array}\rbp\end{array} \label{eq:P_out_HYB_hd}
\end{align}}

To show that $P_{\Out,\HYB}^{(f)}(R) < P_{\Out,\QMF}^{(f)}(R)$, we observe from equations. (\ref{eq:P_out_QMF_hd}) and (\ref{eq:P_out_HYB_hd}) that we have to show:
\begin{align*}
&\Pr \lbp R > (1 - f) \log \lp 1 + g_1^2 + g_2^2 \rp + f \log \lp 1 + g_2^2 \rp \rbp \\
&< \Pr \lbp R > R_{hd, \QMF}(\Delta^*(f)) \rbp
\end{align*}where $\Delta^*(f)$ is the CSIR-optimal quantizer distortion derived previously.

From the QMF rate expression in \eqref{eq:rate_HD_single}, we see that

\begin{align*}
&R_{hd, \QMF}(\Delta^*(f)) \\
& < \lbp\begin{array}{c}(1 - f) \log \lp 1 + g_1^2 + g_2^2 \rp \\ + f \log \lp 1 + g_2^2 \rp - f \log \lp \frac{1 + \Delta^*(f)}{\Delta^*(f)} \rp \end{array}\rbp\\
\implies & R_{hd, \QMF}(\Delta^*(f)) \\
& < (1 - f) \log \lp 1 + g_1^2 + g_2^2 \rp + f \log \lp 1 + g_2^2 \rp
%\Rightarrow& R_{\mathsf{QMF}}(\Delta^*) = \log \lp 1 + g_1^2 + g_2^2 \rp - \epsilon
\end{align*} which proves that $P_{\Out,\HYB}^{(f)}(R) < P_{\Out,\QMF}^{(f)}(R)$.

Next, to show that $P_{\Out,\HYB}^{(f)}(R) < P_{\Out,\DDF}^{(f)}(R)$, we observe from equations (\ref{eq:P_out_DDF}) and (\ref{eq:P_out_HYB_hd}) that we have to show:
\begin{align*}
\Pr \lbp R > R_{hd, \QMF}(\Delta^*(f)) \rbp < \Pr \lbp R > \log \lp 1 + g_2^2 \rp \rbp
\end{align*}where $\Delta^*(f)$ is the CSIR-optimal quantizer distortion derived previously.

From the definition of $\Delta^*(f)$, we know that it minimizes $\Pr \lbp R > R_{hd, \QMF}(\Delta) \rbp$ for every realization $h$ of the $\msf{S}-\msf{R}$ channel at any given schedule. Hence, if we pick any $\Delta = \Delta_1$ irrespective of $h$ and $f$ (in other words, we pick the same quantizer distortion $\Delta_1$ for all realizations $h$ instead of optimized distortions for every realization), we have:
\begin{align*}
\Pr \lbp R > R_{hd, \QMF}(\Delta^*(f)) \rbp < \Pr \lbp R > R_{hd, \QMF}(\Delta_1) \rbp
\end{align*}
Picking $\Delta_1 = \infty $ gives:
\begin{align*}
\Pr \lbp R > R_{hd, \QMF}(\Delta^*(f)) \rbp < \Pr \lbp R > \log \lp 1 + g_2^2 \rp \rbp
\end{align*}which proves that $P_{\Out,\HYB}^{(f)}(R) < P_{\Out,\DDF}^{(f)}(R)$.

\subsection{Numerical Evaluations}

\begin{figure}[!h]
\centering
\subfigure[All channels are i.i.d. The rate scales as: $R = 0.3\log_2(\mathsf{SNR})$]{
\includegraphics[scale=0.4]{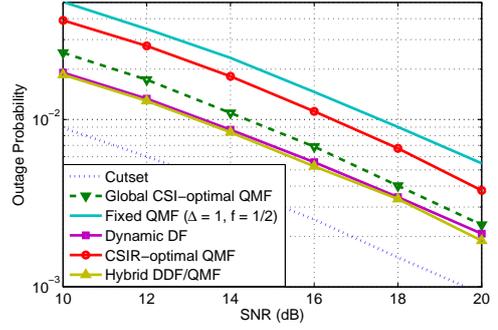}
\label{res:HD_0.3_iid}
}
\subfigure[All channels are i.i.d. The rate scales as: $R = 0.7\log_2(\mathsf{SNR})$]{
\includegraphics[scale=0.4]{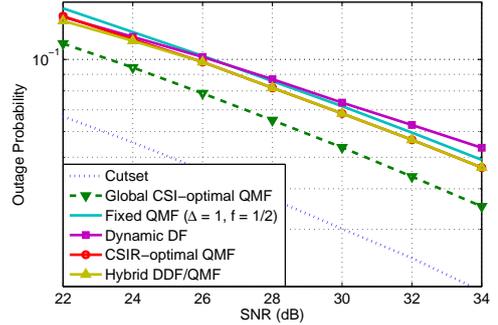}
\label{res:HD_0.7_iid}
}
\caption[]{Outage Performance in Rayleigh faded channels over the half-duplex single-relay network}
\label{res:singlerelay_HD}
\end{figure}
Fig. \ref{res:HD_0.3_iid} and \ref{res:HD_0.7_iid} depict the outage performance of the relaying schemes for two different multiplexing gains--$0.3$ and $0.7$ over i.i.d Rayleigh-faded channels\footnote{The schedule for the cutset bound expression in \eqref{eq:cutset_HD} is optimized using global CSI, similar to Sec.~\ref{subsec:globalCSI_HD} and is given by $f^* = \arg_{f}\lbp C_{hd,1} = C_{hd,2} \rbp = \frac{\log \lp \frac{1 + \lp g_1 + g_2 \rp^2}{1 + g_2^2} \rp}{\log \lp \frac{\lp 1 + h^2 + g_2^2 \rp \lp 1 + \lp g_1 + g_2 \rp^2 \rp}{\lp 1 + g_2^2 \rp^2} \rp}$}. In both cases, we see that using CSIR-optimal QMF provides an improvement of $\sim 1$ dB over the baseline scheme used in \cite{PawarAvestimehr_08} to prove DMT optimality. In addition, the global CSI-optimal provides an additional $3$ dB gain over the CSIR optimal in settings where feedback can be exploited.

For multiplexing gains less than $0.5$, both QMF and DDF are DMT-optimal for this network, but the plots in Fig. \ref{res:HD_0.3_iid} clearly demonstrate the edge that DDF has over even optimized QMF in finite SNRs at these rates. The situation changes around for Fig. \ref{res:HD_0.7_iid} however, as in this regime DDF is not DMT-optimal \cite{PawarAvestimehr_08} and is outperformed by CSIR-optimal QMF.

As proved in Section \ref{subsec:DDF/QMF}, the hybridization of CSIR-optimal QMF and DDF consistently outperforms both the schemes in the two settings shown.

\section{Diamond Networks}
\label{sec:multiple_relays}
In this section, we demonstrate how the performance of QMF can be optimized for the $N$-relay diamond network, and how the topology and limited CSI at the relays can be utilized for parameter optimizations. We start off by showing that an \emph{universal} quantizer distortion level at all relays can be appropriately chosen (without the need for CSI) to bound the worst-case gap of QMF to within $\Theta(\log(N))$ bits/s/Hz for an $N$-relay network. Next, we provide analytical solutions to the (non-convex) quantizer optimization problem for the $2$-relay network and also for the symmetrical $N$-relay network when global CSI is available at the relays.
For the CSIR setting, unfortunately we cannot extend the optimization framework in the single-relay case to optimize over quantization levels, as the relay nodes do not have the same knowledge about the network and the objective function in this optimization problem is not decomposable in general.
Instead, we show how CSIR at the relay can be used to invoke a hybrid strategy involving DF and universal QMF that improves the outage performance of the latter.
\subsection{Universal Quantizers Achieve Capacity within $\Theta(\log(N))$ Bits}

Following the reasoning in \cite{LimKim_10}, we can further obtain a better \emph{universal} quantizer distortion (in the sense of \emph{worst-case} gap over all possible channel realizations in the network) independent of channel coefficients. Setting the quantizer distortions to be the same as $\Delta$, we have the following achievable rate: ($R\lp\Omega; \Delta_{[1:N]}\rp$ is defined in \eqref{eq_QMFRate})

\begin{align*}
&R\lp\Omega; \Delta_{[1:N]} = (\Delta,\ldots,\Delta)\rp\\
&= \lb\begin{array}{l}
\log\lp 1+\sum\limits_{i\in\Omega}g_i^2\rp + \log\lp1+\sum\limits_{j\in\Omega^c}\frac{h_j^2}{1+\Delta}\rp\\
- |\Omega|\log\frac{1+\Delta}{\Delta}\end{array}
\rb^+
\end{align*}
With a term-by-term comparison with the upper bound \eqref{eq_MinCut}, we see that the worst-case gap is upper bounded by
\begin{align}
&\mathrm{gap}^*(\Delta;N) \notag \\
\leq& \max \lbp \begin{array}{l} \log(N) + N\log\lp \frac{1 + \Delta}{\Delta} \rp,\\ \lbp\begin{array}{l} \log(N - i) + (N - i)\log \lp \frac{1 + \Delta}{\Delta} \rp \\ + \log \lp 1 + \Delta \rp \end{array}\rbp_{i \in [1:N]} \end{array}\rbp \notag \\
=& \max \lbp \begin{array}{l} \log(N) + N\log\lp \frac{1 + \Delta}{\Delta} \rp,\\ \log(N - 1) + (N - 1)\log \lp \frac{1 + \Delta}{\Delta} \rp + \log \lp 1 + \Delta \rp \end{array}\rbp \label{eq:gap_universal}
\end{align}

Also, one can observe that the bound on the worst-case gap in \eqref{eq:gap_universal} is indeed achieved for certain configurations of the network. The first term in the maximization is achieved when $\Omega = \phi$ is the mincut in both the cutset and QMF rate expressions. A configuration of the following type makes this possible: Let $g_i^2 = \rho^{\alpha} \quad\forall i, h_i^2 = \rho^{\beta} \quad\forall i$ with $\alpha < \beta$ and $\rho \rightarrow \infty$. The second term can be achieved with a configuration of the following type: Let $h_1^2 = \rho^{\alpha}, h_{i,i \in \{2,\ldots,N\}}^2 = \rho^{\alpha'}, g_1^2 = \rho^{\beta'}, g_{i,i \in \{2,\ldots,N\}}^2 = \rho^{\beta}$ with $\alpha < \alpha'$, $\beta < \beta'$ and $\rho \rightarrow \infty$. Thus, the expression in \eqref{eq:gap_universal} can be interpreted as the worst-case gap for a given universal quantizer distortion $\Delta$, as opposed to simply a upper bound on it.

For the above, we now wish to find an optimal $\Delta = \Delta_{\mathsf{opt}}$ that solves the following problem:
\begin{align*}
\Delta_{\mathsf{opt}} &= \arg \min_{\Delta} \mathrm{gap}^*(\Delta;N)
\end{align*}

We note when $\Delta < \frac{N}{N - 1}$, the first term inside the $\max$ in \eqref{eq:gap_universal} is larger, and when $\Delta \geq \frac{N}{N - 1}$, the second term dominates. Taken separately, we see that the first term is monotonically decreasing in $\Delta$ and the second term has exactly one minima which occurs at $\Delta = N - 1$. We observe that if this minima occurs before (in the order of increasing $\Delta$) $\Delta = \frac{N}{N - 1}$, the optimizing $\Delta$ is $\Delta_{\mathsf{opt}} = \frac{N}{N - 1}$. If however, the minima occurs after $\Delta = \frac{N}{N - 1}$, $\Delta_{\mathsf{opt}} = N - 1$. Combining all these, we have the following:
\begin{align*}
\Delta_{\mathsf{opt}} = \begin{cases} \frac{N}{N - 1}, \quad \frac{N}{N - 1} > N-1 \Rightarrow N < 2.618 \Rightarrow N = 2 \\
N - 1, \quad \frac{N}{N - 1} < N-1 \Rightarrow N > 2.618 \Rightarrow N > 2\end{cases}
\end{align*}

This choice of $\Delta_{\mathsf{opt}}$ leads to the following result on the \emph{best} worst-case gap in an $N$-relay diamond network with universal quantizers:
\begin{align*}
    \mathrm{gap}^*(N) &= \begin{cases}2\log(3) - 1, \quad N = 2 \\
    N\log\lp\frac{N}{N - 1}\rp + 2\log \lp N - 1 \rp, \quad N > 2 \end{cases}
\end{align*}

\begin{figure}[!h]
          \centering
          % Requires \usepackage{graphicx}
          \includegraphics[width=0.7\columnwidth]{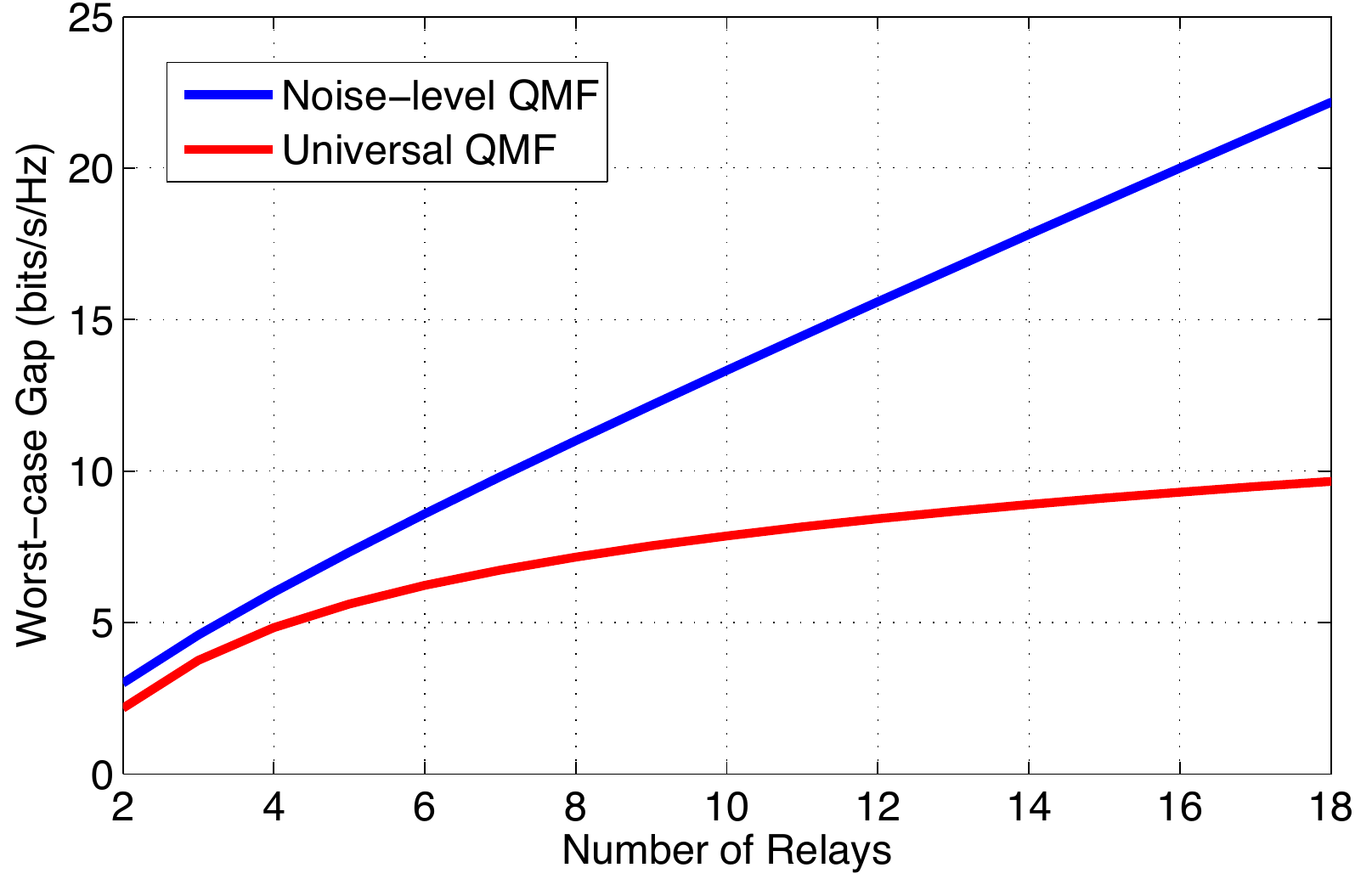}
          \caption{Comparison of worst-case gaps with number of relays}\label{fig:universal_vs_noiselevel}
\end{figure}

One can see that for large $N$, the worst-case gap with optimized universal quantizers grows approximately logarithmically in $N$, as
\begin{align*}
\lim_{N\ra \infty} N\ln\frac{N}{N-1} = \lim_{N\ra\infty} \frac{\frac{1}{N} - \frac{1}{N-1}}{-1/N^2} = \lim_{N\ra\infty} \frac{N}{N-1} = 1.
\end{align*}
In contrast, the worst-case gap with noise-level quantization is dominated by a linear term in $N$. Fig. \ref{fig:universal_vs_noiselevel} makes the benefits of using an optimized universal quantizer derived above explicit, in terms of the gap performance achieved.

\subsection{Channel-Aware Quantizer Optimization}

With global CSI at the relays, the optimization problem can be stated as:
\begin{align}
R^*_{\QMF} &= \max_{\Delta_{[1:N]}\ge 0} R_{\QMF}\lp  \Delta_{[1:N]}\rp \notag \\
&= \max_{\Delta_{[1:N]}\ge 0}\min_{\Omega}\lbp R\lp\Omega; \Delta_{[1:N]}\rp\rbp. \label{eq_NRelay_G_Opt}
\end{align}

We note that this optimization is not convex, as within the minimization part of \eqref{eq_NRelay_G_Opt}, for $\Omega = \emptyset$, the function $R\lp\Omega; \Delta_{[1:N]}\rp$ is not concave in $\Delta_{[1:N]}$. Instead, we provide an analytical characterization of the optimizing distortions and the corresponding achievable rates for
the $2$-relay network, and for symmetric $N$-Relay networks.

\subsubsection*{Solution for the $2$-Relay Network}
%For notational convenience let us denote
%\begin{align}
%&h_1 := h_{\msf{A}_1\msf{S}} & &h_2 := h_{\msf{A}_2\msf{S}} &
%&g_1 := h_{\msf{D}\msf{A}_1} & &g_2 := h_{\msf{D}\msf{A}_2}
%\end{align}

In order to solve the optimization problem in \eqref{eq_NRelay_G_Opt} for the case $N=2$,
%\begin{align}
%R_{\text{QMF,G}} = \max_{\Delta_1,\Delta_2 > 0} \min_{\Omega} R\lp\Omega;\Delta_1,\Delta_2\rp
%\end{align}
we consider the following equivalent formulation:
\begin{align}\notag
R^*_{\text{QMF,G}} &= \max_{\Delta_2 > 0} \lbp \max_{\Delta_1 > 0}\min_{\Omega} R\lp\Omega;\Delta_1,\Delta_2\rp \rbp\\
&:= \max_{\Delta_2 > 0} R_{\text{QMF,G}}^*(\Delta_2) \label{eq_2Relay_G_Opt}
\end{align}
with $R_{\text{QMF,G}}^*(\Delta_2) = \max_{\Delta_1 > 0}\min_{\Omega} R\lp\Omega;\Delta_1,\Delta_2\rp $, $\Delta_2 >0$.

We shall first characterize $R_{\text{QMF,G}}^*(\Delta_2)$ and then optimize it over $\Delta_2$ to obtain the solution. The following lemma and theorem summarize the main result.

\begin{lemma}[Characterization of {\rm $R_{\text{QMF,G}}^*(\Delta_2)$}] \label{lem_2Relay}
Let us define the following intervals of $\Delta_2$:
\begin{align*}
%\mcal{I}_1 &:= \lb \delta_1,\infty\rp & \mcal{I}_2 &:= \lp 0, \delta_2\rp \\
%\mcal{I}_3 &:= \lb \delta_2, \delta_3\rp & \mcal{I}_4 &:= \lb \delta_3, \delta_1\rp
%\mcal{I}_1 &:= \lb \delta_1,\infty\rp \hspace{36pt} \mcal{I}_2 := \lp 0, \min\lbp \delta_1,\max\lp\delta_2,\delta_3\rp\rbp\rp \\
%\mcal{I}_3 &:= \lb \min\lbp \delta_1,\max\lp\delta_2,\delta_3\rp\rbp, \delta_1\rp
\mcal{I}_1 &:= \lp 0, \delta_1\rp & \mcal{I}_2 &:=  \lb \delta_1, \delta_2\rp &
\mcal{I}_3 &:= \lb \delta_2,\infty\rp
\end{align*}
where
%\begin{align*}
%\delta_1 &:= \frac{(1+g_1^2+g_2^2)(1+h_1^2+h_2^2) + (1+g_2^2)h_1^2h_2^2}{g_2^2(1+g_1^2+g_2^2)(1+h_1^2)} \\
%%\delta_2 := \frac{1+h_2^2}{g_1^2+g_2^2}\\
%\delta_2 &:= \frac{(1+g_1^2)(1+h_2^2)}{g_2^2}
%\end{align*}
$
\delta_1 := \frac{(1+g_1^2+g_2^2)(1+h_1^2+h_2^2) + (1+g_2^2)h_1^2h_2^2}{g_2^2(1+g_1^2+g_2^2)(1+h_1^2)}$, and
$
\delta_2 := \frac{(1+g_1^2)(1+h_2^2)}{g_2^2}$.
In each range of $\Delta_2$, the optimizing $\Delta_1^*$ and $\Omega^*$ in the max-min problem $\max_{\Delta_1 > 0}\min_{\Omega} R\lp\Omega;\Delta_1,\Delta_2\rp$ is given as follows:
\begin{align*}
\bullet\quad &\Delta_2\in\mcal{I}_1: & &\\
&\Delta_1^* = \frac{(1+g_2^2)(1+h_1^2)}{g_1^2} &
&\Omega^* = \{1,2\}\mbox{ or } \{2\}\\
\bullet\quad &\Delta_2\in\mcal{I}_2: & &\\
&\Delta_1^* = \frac{(1+h_1^2)\Delta_2+(1+h_1^2+h_2^2)}{(g_1^2+g_2^2)\Delta_2-(1+h_2^2)} &
&\Omega^* = \{1,2\}\mbox{ or } \emptyset\\
\bullet\quad &\Delta_2\in\mcal{I}_3: & &\\
&\Delta_1^*  = \frac{(1+h_1^2)\Delta_2 + (1+h_1^2+h_2^2)}{g_1^2(\Delta_2+(1+h_2^2))} &
&\Omega^* = \{1\}\mbox{ or } \emptyset
\end{align*}
%\end{itemize}
Moreover we always have $0< \delta_1 < \delta_2$, and hence the three intervals $\mcal{I}_1,\mcal{I}_2,\mcal{I}_3$ are not empty.
\end{lemma}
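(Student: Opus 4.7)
\medskip\noindent\textbf{Proof proposal.}
My plan is to treat $\max_{\Delta_1>0}\min_\Omega R(\Omega;\Delta_1,\Delta_2)$ as a univariate max--min in $\Delta_1$ with $\Delta_2$ held fixed, identify which two cuts form the active pair at the optimum, and split on $\Delta_2$ accordingly. Writing out the four cut rates $R(\Omega;\Delta_1,\Delta_2)$ for $\Omega\in\{\emptyset,\{1\},\{2\},\{1,2\}\}$ from \eqref{eq_QMFRate} and inspecting the dependence on $\Delta_1$ with $\Delta_2$ fixed, one sees that $R(\emptyset;\cdot)$ and $R(\{2\};\cdot)$ are strictly decreasing in $\Delta_1$ through the factor $h_1^2/(1+\Delta_1)$, while $R(\{1\};\cdot)$ and $R(\{1,2\};\cdot)$ are strictly increasing in $\Delta_1$ through $-\log((1+\Delta_1)/\Delta_1)$. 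As $\Delta_1\to 0^+$ the two increasing rates diverge to $-\infty$ while the decreasing rates remain bounded, so $\min_\Omega R$ starts at $-\infty$, rises, and then decays; its peak is attained at a $\Delta_1^*$ where some increasing rate meets some decreasing rate.

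Four such crossings are \emph{a priori} possible. Two algebraic comparisons pin down the active pair and reduce to conditions on $\Delta_2$ alone. (i) The difference of the two increasing rates, $R(\{1\}) - R(\{1,2\}) = \log\!\bigl[(1+g_1^2)(1+\Delta_2+h_2^2)/(\Delta_2(1+g_1^2+g_2^2))\bigr]$, is independent of $\Delta_1$ and changes sign exactly at $\Delta_2 = \delta_2 = (1+g_1^2)(1+h_2^2)/g_2^2$; hence $\{1,2\}$ dominates $\{1\}$ in the inner min iff $\Delta_2 \le \delta_2$. (ii) Solving $R(\{1,2\}) = R(\{2\})$ in isolation yields a $\Delta_2$-independent crossing $\Delta_1^* = (1+g_2^2)(1+h_1^2)/g_1^2$; substituting this into the inequality $R(\emptyset) \ge R(\{2\})$ and clearing denominators collapses, after simplification, to $\Delta_2 \le \delta_1$ with $\delta_1$ exactly as stated. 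These two thresholds are therefore forced by the problem structure.

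The thresholds partition $(0,\infty)$ into the three intervals $\mcal{I}_1,\mcal{I}_2,\mcal{I}_3$. In $\mcal{I}_1$ both conditions hold, so the active pair is $(\{1,2\},\{2\})$ and the stated constant $\Delta_1^*$ results from $R(\{1,2\}) = R(\{2\})$. In $\mcal{I}_3$ both conditions fail: the effective increasing rate in the inner min is $R(\{1\})$ and the effective decreasing one (at the relevant $\Delta_1^*$) is $R(\emptyset)$, so solving $R(\{1\})=R(\emptyset)$ yields the stated rational expression in $\Delta_2$. In the middle regime $\mcal{I}_2$ exactly one condition fails, leaving $(\{1,2\},\emptyset)$ as the active pair and $R(\{1,2\})=R(\emptyset)$ as the defining equation; its denominator $(g_1^2+g_2^2)\Delta_2 - (1+h_2^2)$ is positive throughout $\mcal{I}_2$ since a short algebraic check shows $\delta_1 > (1+h_2^2)/(g_1^2+g_2^2)$. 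Nonemptiness of the intervals reduces to $0 < \delta_1 < \delta_2$: placing both over a common denominator, $\delta_2 - \delta_1$ has numerator $g_1^2\bigl[h_1^2 h_2^2 + (1+g_1^2+g_2^2)(1+h_1^2)(1+h_2^2)\bigr] > 0$.

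The main obstacle will be justifying, in each interval, that the two cuts \emph{outside} the active pair yield rates no smaller than the active common value at the chosen $\Delta_1^*$. The increasing--increasing side is uniform in $\Delta_1$ and handled by the $\delta_2$ condition in (i); the decreasing--decreasing side $R(\emptyset)$ versus $R(\{2\})$ depends on $\Delta_1$, so for each regime one must substitute the interval-specific $\Delta_1^*$ into the comparison and re-express it purely in terms of $\Delta_2$. The complicated form of $\delta_1$ is precisely what emerges from this substitution at the $\mcal{I}_1$ candidate; continuity of the piecewise-defined $\Delta_1^*$ at the boundaries $\Delta_2 = \delta_1$ and $\Delta_2 = \delta_2$, which can be verified directly from the stated formulas, then provides a useful consistency check across regimes.
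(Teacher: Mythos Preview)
Your proposal is correct and follows essentially the same route as the paper: fix $\Delta_2$, use the monotonicity of the four cut rates in $\Delta_1$ to reduce the max--min to an intersection problem, split the increasing pair via the $\Delta_1$-independent comparison $R(\{1\})$ vs.\ $R(\{1,2\})$ (yielding $\delta_2$), and split the decreasing pair by substituting the candidate $\Delta_1^*$ (yielding $\delta_1$), together with the auxiliary inequalities $\tfrac{1+h_2^2}{g_1^2+g_2^2}<\delta_1<\delta_2$. The one place where the paper does more work than your outline is the $\mcal{I}_3$ verification that $R(\{2\})$ is inactive at $\Delta_1^*(1;\emptyset)$: this is the paper's Claim~1, proved by bounding a quadratic in $\Delta_2$ using $g_2^2\Delta_2\ge(1+g_1^2)(1+h_2^2)$, and it is exactly the ``main obstacle'' you flag at the end.
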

\begin{proof}
See Appendix \ref{app_Pf_lem_2Relay}
\end{proof}
\begin{theorem}\label{thm_2Relay}
Let
%\begin{align*}
%A &:= h_1^2(1+h_1^2) - h_2^2(1+h_1^2+g_1^2+g_2^2) \\
%B &:= 2h_1^2(1+h_1^2)  \hspace{36pt}
%C := h_1^2(1+h_1^2+h_2^2)
%\end{align*}
$
A := h_1^2(1+h_1^2) - h_2^2(1+h_1^2+g_1^2+g_2^2)$,
$B := 2h_1^2(1+h_1^2)$,
$C := h_1^2(1+h_1^2+h_2^2)$,
and
$
\delta_3 := \frac{-B-\sqrt{B^2-4AC}}{2A}
$.

%If $\mcal{I}_3 = \emptyset$, that is, $\delta_2<\delta_1\le \delta_3$, then both
%\begin{align*}
%\Delta_2^* &=  \delta_1&
%\Delta_1^* &= \frac{(1+g_2^2)(1+h_1^2)}{g_1^2}
%\end{align*}
%and
%\begin{align*}
%\Delta_2^* &= \delta_1&
%\Delta_1^* &= \frac{(1+h_1^2)\delta_1 + (1+h_1^2+h_2^2)}{g_1^2(\delta_1+(1+h_2^2))}
%\end{align*}
%are optimal solutions.
%
%If $\mcal{I}_3 \ne \emptyset$:
The solution to the maximization problem in \eqref{eq_2Relay_G_Opt}, is summarized below:
{\flushleft 1) $A\ge 0$ or $\delta_3 \in \mcal{I}_3$}: The optimal solution is
\begin{align*}
\Delta_2^* &= \delta_2 &
\Delta_1^* &= \frac{(1+h_1^2)\delta_2+(1+h_1^2+h_2^2)}{(g_1^2+g_2^2)\delta_2-(1+h_2^2)}\\
& & &= \frac{(1+h_1^2)\delta_2 + (1+h_1^2+h_2^2)}{g_1^2(\delta_2+(1+h_2^2))}
\end{align*}
{\flushleft 2) $A < 0$ and $\delta_3 \in \mcal{I}_1$}: The optimal solution is
\begin{align*}
\Delta_2^* &= \delta_1&
\Delta_1^* &= \frac{(1+h_1^2)\delta_1+(1+h_1^2+h_2^2)}{(g_1^2+g_2^2)\delta_1-(1+h_2^2)}\\
& & &= \frac{(1+g_2^2)(1+h_1^2)}{g_1^2}
\end{align*}
{\flushleft 3) $A < 0$ and $\delta_3 \in\mcal{I}_2$}: The optimal solution is
\begin{align*}
\Delta_2^* &= \delta_3 &
\Delta_1^* &= \frac{(1+h_1^2)\delta_3+(1+h_1^2+h_2^2)}{(g_1^2+g_2^2)\delta_3-(1+h_2^2)}
\end{align*}
\end{theorem}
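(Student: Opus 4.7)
The plan is to combine Lemma~\ref{lem_2Relay} with a monotonicity analysis of $R_{\text{QMF,G}}^*(\Delta_2)$ on each of $\mcal{I}_1$, $\mcal{I}_2$, $\mcal{I}_3$, and then resolve the optimum by a case split on the sign of $A$ and the location of $\delta_3$.

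First I would substitute the interval-dependent pair $(\Delta_1^*(\Delta_2), \Omega^*)$ from Lemma~\ref{lem_2Relay} into $R(\Omega; \Delta_1, \Delta_2)$ to express $R_{\text{QMF,G}}^*(\Delta_2)$ as an explicit, continuous piecewise function of $\Delta_2$. On $\mcal{I}_1$, since $\Delta_1^*$ is a channel-dependent \emph{constant} and both admissible cuts $\{1,2\}$ and $\{2\}$ contribute only the term $-\log\frac{1+\Delta_2}{\Delta_2}$ as $\Delta_2$-dependence, the function is strictly increasing on $\mcal{I}_1$. On $\mcal{I}_3$, a similar substitution (now with the $\Delta_2$-dependent $\Delta_1^*$) combined with the fact that $R(\{1\}; \cdot)$ and $R(\emptyset; \cdot)$ are both monotonically decreasing in $\Delta_2$ gives strict decrease on $\mcal{I}_3$. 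Hence the global maximum must lie in the closure $[\delta_1, \delta_2]$ of $\mcal{I}_2$.

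The core of the proof is to find the critical points inside $\mcal{I}_2$. There the optimizer is characterized by $R(\{1,2\}; \Delta_1^*(\Delta_2), \Delta_2) = R(\emptyset; \Delta_1^*(\Delta_2), \Delta_2)$, with $\Delta_1^*(\Delta_2) = \frac{(1+h_1^2)\Delta_2+(1+h_1^2+h_2^2)}{(g_1^2+g_2^2)\Delta_2-(1+h_2^2)}$. Along this curve $R_{\text{QMF,G}}^*(\Delta_2)$ coincides with either of the two rate expressions; differentiating, say, $R(\emptyset; \Delta_1^*(\Delta_2), \Delta_2) = \log\!\lp 1 + \tfrac{h_1^2}{1+\Delta_1^*(\Delta_2)} + \tfrac{h_2^2}{1+\Delta_2}\rp$ with respect to $\Delta_2$ via the chain rule and clearing common denominators, the first-order condition reduces after algebraic simplification exactly to the quadratic
\begin{align*}
A \Delta_2^2 + B \Delta_2 + C = 0,
\end{align*}
with the coefficients $A, B, C$ as in the theorem. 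Since $B, C > 0$ always, a positive root exists only when $A < 0$, and the relevant root is $\delta_3 = \frac{-B - \sqrt{B^2 - 4AC}}{2A}$; the other root is extraneous (negative or outside the domain).

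The theorem's three cases then follow by locating $\delta_3$ relative to $\mcal{I}_1, \mcal{I}_2, \mcal{I}_3$. If $A \ge 0$ there is no interior critical point on $\mcal{I}_2$, so by continuity with the monotone increasing piece on $\mcal{I}_1$ and the monotone decreasing piece on $\mcal{I}_3$, $R_{\text{QMF,G}}^*$ is increasing on $\mcal{I}_2$ and peaks at $\Delta_2^* = \delta_2$; the same conclusion holds when $A<0$ but $\delta_3 \in \mcal{I}_3$, for the same reason. If $A < 0$ and $\delta_3 \in \mcal{I}_1$, the unique critical point sits to the left of $\mcal{I}_2$, forcing $R_{\text{QMF,G}}^*$ to be decreasing on $\mcal{I}_2$, so $\Delta_2^* = \delta_1$. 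Finally, if $A<0$ and $\delta_3 \in \mcal{I}_2$, then $\delta_3$ is the unique interior critical point and a sign check of the derivative at the boundaries of $\mcal{I}_2$ confirms it is a maximum, giving $\Delta_2^* = \delta_3$. The stated $\Delta_1^*$ in each case is read off by plugging $\Delta_2^*$ into the appropriate Lemma~\ref{lem_2Relay} formula; the two equivalent expressions in cases (1) and (2) coincide precisely because $\delta_2$ and $\delta_1$ are the interval boundaries.

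The main obstacle is the algebraic simplification in the $\mcal{I}_2$ critical-point calculation: because $\Delta_1^*(\Delta_2)$ is a rational function of $\Delta_2$, the derivative is messy, and substantial care is required to show that the first-order condition collapses to the clean quadratic with the specific coefficients $A, B, C$. Once that quadratic is established, the remaining case analysis is a routine bookkeeping exercise using the monotonicity facts on $\mcal{I}_1$ and $\mcal{I}_3$.
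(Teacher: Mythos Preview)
Your proposal is correct and follows essentially the same route as the paper's proof: monotone increase on $\mcal{I}_1$ (constant $\Delta_1^*$, only $-\log\tfrac{1+\Delta_2}{\Delta_2}$ dependence), monotone decrease on $\mcal{I}_3$, and a derivative computation on $\mcal{I}_2$ that collapses to the quadratic $A\Delta_2^2+B\Delta_2+C$, followed by the same three-way case split. One small sharpening: on $\mcal{I}_3$ it is not quite enough to say that $R(\{1\};\cdot)$ and $R(\emptyset;\cdot)$ are ``both monotonically decreasing in $\Delta_2$'', because $R(\{1\};\Delta_1,\Delta_2)$ is \emph{increasing} in $\Delta_1$ and $\Delta_1^*(\Delta_2)$ is itself increasing; the clean argument (and the one the paper uses) is to work with $R(\emptyset;\Delta_1,\Delta_2)$, which is decreasing in both arguments, and then transfer the conclusion to $R(\{1\})$ via the equality along the optimizing curve.
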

\begin{proof}
See Appendix \ref{app_Pf_thm_2Relay}
\end{proof}
\subsubsection*{Solution for the Symmetric $N$-Relay Network}
We consider the case where $h_i = h$ and $g_i = g$ for all $i=1,\ldots,N$. By symmetry, the optimal distortion level $\Delta_i = \Delta$ for all $i=1,\ldots,N$, and the optimization becomes
\begin{align*}
&R^*_{\QMF} = \max_{\Delta\ge0}\min_{k\in[0:N]} R_k(\Delta),
\end{align*}
where $R_k(\Delta) := \log\lp 1+(N-k)\frac{h^2}{1+\Delta}\rp + \log\lp 1+kg^2\rp - k\log\lp\frac{1+\Delta}{\Delta}\rp$.
By plotting the $R_k(\Delta)$'s for $k\in[0:N]$ as a function of $\Delta$ for various $N, h \mbox{ and } g$ combinations, one can observe that the $\max\min$ optimum appears to occur at the value of $\Delta$ where the curves $R_0(\Delta)$ and $R_N(\Delta)$ intersect. A demonstrative plot is shown for a $5$-relay network in Fig. \ref{proofdemo_NRelay}. This observation can in fact be proved, for which the following lemmas are necessary (proofs in Appendix \ref{app_Pf_lem1_NRelay}):
\begin{figure}[!h]
  \centering
  % Requires \usepackage{graphicx}
  \includegraphics[width=0.7\columnwidth]{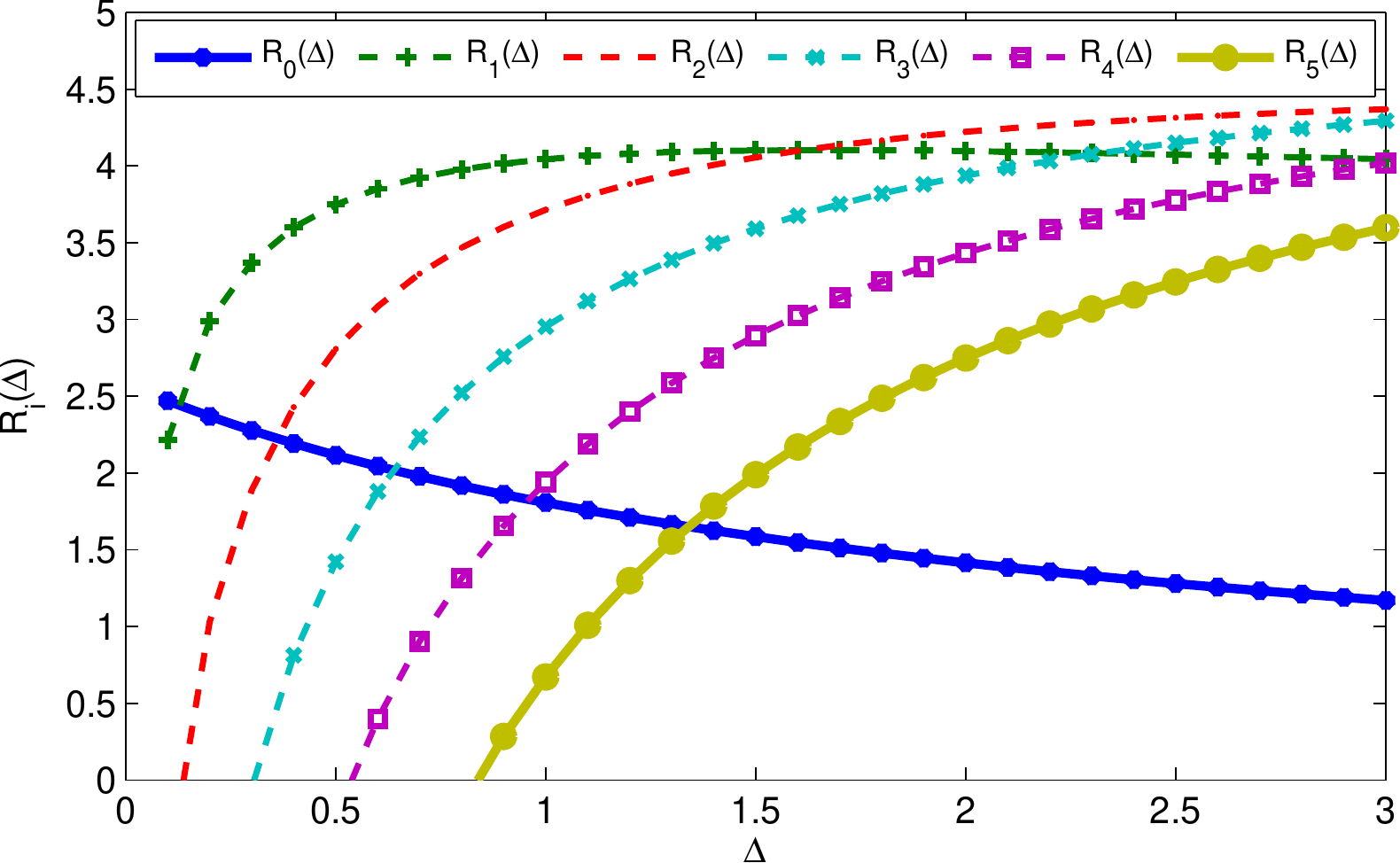}\\
  \caption{$R_i(\Delta)$'s as a function of $\Delta$ for a $5$-Relay symmetric network with $|h| = 1$ and $|g| = 3.16$.}\label{proofdemo_NRelay}
\end{figure}
\begin{lemma} \label{lem1_NRelay}
There exists exactly one positive $\Delta$ = $\Delta^{*}_{ij}$ satisfying $R_i(\Delta) = R_j(\Delta) \quad \forall i,j \in [0:N]$ and $i\neq j$.
\end{lemma}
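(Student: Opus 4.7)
The plan is to fix a pair $(i,j)\in[0:N]^2$ with $i\neq j$, assume without loss of generality that $i<j$, and study the difference function
\[
  f(\Delta) \;:=\; R_i(\Delta) - R_j(\Delta), \qquad \Delta>0.
\]
I would show that $f$ is continuous and strictly monotonically decreasing on $(0,\infty)$, with $\lim_{\Delta\to 0^+} f(\Delta) = +\infty$ and $\lim_{\Delta\to\infty} f(\Delta) < 0$. The intermediate value theorem combined with strict monotonicity then yields exactly one positive root $\Delta^*_{ij}$, which is what the lemma asserts.

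The concrete computation is to split $f$ into three pieces using the definition of $R_k(\Delta)$:
\[
  f(\Delta) \;=\; \underbrace{\log\frac{(1+\Delta) + (N-i)h^2}{(1+\Delta) + (N-j)h^2}}_{T_1(\Delta)} \;+\; \underbrace{\log\frac{1+ig^2}{1+jg^2}}_{T_2} \;+\; \underbrace{(j-i)\log\frac{1+\Delta}{\Delta}}_{T_3(\Delta)}.
\]
Because $i<j$ implies $(N-i)h^2 > (N-j)h^2$, a direct derivative check shows $T_1$ is positive and strictly decreasing in $\Delta$, vanishing as $\Delta\to\infty$. The middle term $T_2$ is a (strictly negative) constant, since $i<j$ forces $1+ig^2 < 1+jg^2$ whenever $g>0$. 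Finally $T_3$ is positive, strictly decreasing, blows up to $+\infty$ as $\Delta\to 0^+$, and tends to $0$ as $\Delta\to\infty$. Summing, $f$ inherits strict monotonic decrease from $T_1$ and $T_3$, the limit $+\infty$ at the origin comes from $T_3$, and the limit at infinity equals $T_2<0$.

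With these four properties established, existence and uniqueness of the positive crossing follow immediately: $f$ is continuous and strictly decreasing from $+\infty$ down to the strictly negative value $T_2$, so it hits zero at a single point $\Delta^*_{ij}\in(0,\infty)$. The main (very mild) subtlety to flag is the nondegeneracy assumption $g>0$: if $g=0$ one has $T_2=0$ and the argument at infinity needs a finer second-order expansion of $T_1+T_3$, but this regime is not relevant to the symmetric $N$-relay problem under consideration. The whole proof is therefore a short, self-contained monotonicity-plus-IVT argument, and I do not anticipate any real obstacle beyond bookkeeping of the signs in the three pieces of $f$.
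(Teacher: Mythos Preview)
Your proposal is correct. The decomposition $f=T_1+T_2+T_3$ is accurate, the sign and monotonicity claims for each piece check out (in particular $T_1'(\Delta)=\frac{(N-j)h^2-(N-i)h^2}{((1+\Delta)+(N-i)h^2)((1+\Delta)+(N-j)h^2)}<0$), and the IVT argument with strict monotonicity delivers existence and uniqueness of the crossing. Your caveat about $g=0$ is appropriate and correctly identified as irrelevant to the setting at hand.

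This is a genuinely different route from the paper's proof. The paper clears denominators to obtain a polynomial equation of degree $|i-j|+1$ in $\Delta$, then counts the sign changes in its coefficient sequence and invokes Descartes' rule of signs to conclude there is exactly one positive root. Your approach stays with the logarithmic expression and argues analytically via monotonicity. The advantage of your method is that it is shorter, avoids the somewhat delicate bookkeeping of the polynomial coefficients, and makes the structural reason for uniqueness transparent (a strictly decreasing function from $+\infty$ to a negative limit). The paper's algebraic route has the minor virtue of being purely polynomial once the equation is written down, but it requires verifying the sign pattern of $|i-j|+2$ coefficients and appealing to Descartes' rule, which is arguably heavier machinery for the same conclusion.
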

\begin{lemma} \label{lem2_NRelay}
$\lim_{\Delta\rightarrow 0} \{R_i(\Delta) - R_{i+1}(\Delta)\} \geq 0 \quad \forall i \in [0:N-1]$.
\end{lemma}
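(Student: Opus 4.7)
The plan is to directly compute the difference $R_i(\Delta) - R_{i+1}(\Delta)$ term-by-term using the definition of $R_k(\Delta)$, and then take the limit as $\Delta\to 0^+$. From the definition, both $R_i$ and $R_{i+1}$ share the same functional form except that the middle $\log(1+kg^2)$ summand and the first $\log\lp 1+(N-k)h^2/(1+\Delta)\rp$ summand are evaluated at $k=i$ and $k=i+1$ respectively, while the last summand $-k\log\lp\frac{1+\Delta}{\Delta}\rp$ has coefficients $-i$ and $-(i+1)$. So the difference takes the clean form
\begin{align*}
R_i(\Delta) - R_{i+1}(\Delta) = \log\lp\frac{1+(N-i)\frac{h^2}{1+\Delta}}{1+(N-i-1)\frac{h^2}{1+\Delta}}\rp + \log\lp\frac{1+ig^2}{1+(i+1)g^2}\rp + \log\lp\frac{1+\Delta}{\Delta}\rp.
\end{align*}

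Next I would analyze each of the three terms in the limit $\Delta\to 0^+$. The first term converges to the finite value $\log\lp\frac{1+(N-i)h^2}{1+(N-i-1)h^2}\rp$, which is in fact nonnegative since the numerator dominates the denominator. The second term converges to the finite value $\log\lp\frac{1+ig^2}{1+(i+1)g^2}\rp$, which is nonpositive but finite. The third term $\log\lp\frac{1+\Delta}{\Delta}\rp$ diverges to $+\infty$ as $\Delta\to 0^+$. Since the divergent third term dominates the two bounded contributions, the overall limit equals $+\infty$, which is certainly $\geq 0$, establishing the lemma for every $i\in[0:N-1]$.

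There is essentially no obstacle here: the only mild subtlety is that for $i\ge 1$ both $R_i(0)$ and $R_{i+1}(0)$ individually diverge to $-\infty$, so one must work with the difference rather than the individual limits. Writing the cancellation explicitly (i.e., noting that the coefficient of $\log\lp(1+\Delta)/\Delta\rp$ in the difference is exactly $+1$ regardless of $i$) avoids any indeterminate form and makes the divergence of the difference to $+\infty$ transparent. This proof also yields a slightly stronger conclusion than stated, namely that the limit is $+\infty$ rather than merely nonnegative, which will be useful (together with Lemma \ref{lem1_NRelay}) in arguing that the curves $R_0(\Delta)$ and $R_N(\Delta)$ intersect below all other $R_k$ curves and therefore determine the $\max\min$ optimum.
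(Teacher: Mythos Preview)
Your proof is correct and follows essentially the same approach as the paper: you compute the difference $R_i(\Delta)-R_{i+1}(\Delta)$ explicitly, isolate the surviving $\log\lp\frac{1+\Delta}{\Delta}\rp$ term, and observe that it drives the limit to $+\infty$. The paper's argument is identical (with the first logarithmic ratio written as $\log\lp\frac{1+\Delta+(N-i)h^2}{1+\Delta+(N-i-1)h^2}\rp$, equivalent to your form), and your remark about avoiding the $-\infty-(-\infty)$ indeterminacy by working directly with the difference is a nice clarification the paper leaves implicit.
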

\begin{lemma} \label{lem3_NRelay}
Let $\Delta^{*}_{(i)(i+1)}$ be the unique positive solution of $R_i(\Delta) = R_{i+1}(\Delta) \quad \forall i \in [0:N-1]$. Then, $\Delta^{*}_{(i)(i+1)}$ is non-decreasing in $i$, i.e., $\Delta^{*}_{01}\leq \Delta^{*}_{12}\leq \ldots \leq \Delta^{*}_{(N-1)(N)}$.
\end{lemma}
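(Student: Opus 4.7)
The plan is to study the difference
\[
\psi_i(\Delta) := R_i(\Delta) - R_{i+1}(\Delta),
\]
whose unique positive zero is $\Delta^{*}_{(i)(i+1)}$ by Lemma~\ref{lem1_NRelay}. Regrouping the two $\log\lp 1+(N-k)h^2/(1+\Delta)\rp$ terms and the two $\log(1+kg^2)$ terms, the telescoping inside each group gives the compact form
\[
\psi_i(\Delta) = \log\lp 1 + \frac{h^2}{(1+\Delta)+(N-i-1)h^2}\rp + \log\lp\frac{1+\Delta}{\Delta}\rp - \log\lp\frac{1+(i+1)g^2}{1+ig^2}\rp.
\]
Only the first two terms depend on $\Delta$, and both are strictly decreasing in $\Delta$, so $\psi_i(\cdot)$ is strictly decreasing. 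Combined with Lemma~\ref{lem2_NRelay} (which says $\psi_i(0^+)\ge 0$) and the elementary observation that $\psi_i(\infty)<0$, this recovers the unique-positive-zero structure needed in the statement.

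The heart of the argument is the pointwise inequality $\psi_{i+1}(\Delta)\ge \psi_i(\Delta)$ for every $\Delta>0$ and every $i\in[0:N-2]$. Subtracting the two expressions, the difference splits cleanly into an $h$-part and a $g$-part, each handled independently. Setting $A:=(1+\Delta)+(N-i-1)h^2$, the $h$-part telescopes to $\log\lp A^2/(A^2-h^4)\rp$, which is manifestly nonnegative. The $g$-part equals $\log\lp (1+(i+1)g^2)^2/\lb(1+ig^2)(1+(i+2)g^2)\rb\rp$, which is nonnegative by AM--GM applied to $(1+ig^2)$ and $(1+(i+2)g^2)$ -- equivalently, by (discrete) concavity of $x\mapsto\log(1+xg^2)$ evaluated at three consecutive integers. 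Summing the two nonnegative contributions yields the claim.

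To conclude, evaluate this pointwise inequality at $\Delta=\Delta^{*}_{(i)(i+1)}$: since $\psi_i(\Delta^{*}_{(i)(i+1)})=0$ by definition, we obtain $\psi_{i+1}(\Delta^{*}_{(i)(i+1)})\ge 0$. Because $\psi_{i+1}$ is strictly decreasing in $\Delta$ with unique positive zero $\Delta^{*}_{(i+1)(i+2)}$, this sign constraint forces $\Delta^{*}_{(i)(i+1)}\le \Delta^{*}_{(i+1)(i+2)}$, which is exactly the claimed monotonicity. I expect the main obstacle to be recognizing the right object to compare pointwise in $i$; once the split of $\psi_{i+1}-\psi_i$ into an $h$-telescoping piece and a discrete-concavity $g$-piece is spotted, both nonnegativities are immediate and no appeal to implicit-function arguments or numerical arguments is needed.
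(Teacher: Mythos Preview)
Your proof is correct and takes a genuinely different route from the paper's. The paper argues by brute force: it clears denominators to turn $2^{R_{m+1}}-2^{R_m}$ into a polynomial $f_m(\Delta)$, writes the quadratic satisfied by $\Delta^*_m$, reduces the monotonicity claim to a sign condition $f_{m+1}(\Delta^*_m)<0$, identifies a threshold $\Delta_t$ for which this sign is equivalent to $\Delta^*_m\ge\Delta_t$, and then verifies $f_m(\Delta_t)<0$ by expanding everything and checking that a certain quadratic in $m$ has negative discriminant. Your argument sidesteps all of this algebra by proving the stronger \emph{pointwise} inequality $\psi_{i+1}(\Delta)\ge\psi_i(\Delta)$ for every $\Delta>0$: the $h$-part telescopes to $\log\bigl(A^2/(A^2-h^4)\bigr)\ge 0$ and the $g$-part is $\ge 0$ by AM--GM (equivalently, concavity of $x\mapsto\log(1+xg^2)$). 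Combined with the strict monotonicity of each $\psi_i$ in $\Delta$, the ordering of the zeros is immediate. Your approach is shorter, more transparent, and isolates the structural reason the lemma holds; the paper's approach, by contrast, yields the explicit quadratic for $\Delta^*_m$ as a by-product, which is not needed here but could be useful elsewhere.
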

These lemmas ensure that each $R_k(\Delta)$, $k \in [1:N]$ intersects $R_0(\Delta)$ \emph{before} (in the order of increasing $\Delta$) any other curve. Since $R_0(\Delta)$ is a \emph{decreasing} function of $\Delta$, the initial observation is validated and can be crystalized as follows:

\begin{theorem}\label{thm_NRelay_sym}
$\left(R_0(\Delta^{*}_{0N}),\Delta^{*}_{0N}\right)$ attains optimum, where $\Delta^{*}_{0N}$ is the root of $R_0(\Delta) = R_N(\Delta)$.
\end{theorem}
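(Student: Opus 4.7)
The plan is to show that $F(\Delta):=\min_{k\in[0:N]} R_k(\Delta)$ attains its maximum at $\Delta^*_{0N}$, the unique positive solution of $R_0(\Delta)=R_N(\Delta)$ (which is unique by Lemma~\ref{lem1_NRelay}). A quick differentiation of $R_0(\Delta)=\log(1+Nh^2/(1+\Delta))$ and $R_N(\Delta)=\log(1+Ng^2)-N\log((1+\Delta)/\Delta)$ shows that $R_0$ is strictly decreasing and $R_N$ is strictly increasing on $(0,\infty)$. I will establish that $F=R_N$ on $(0,\Delta^*_{0N}]$ and $F=R_0$ on $[\Delta^*_{0N},\infty)$, which pins the peak of $F$ at $\Delta^*_{0N}$.

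The heart of the argument is the auxiliary observation quoted just before the theorem: for each $k\in[1:N]$, $R_k$ first meets $R_0$ before meeting any other curve, i.e., $\Delta^*_{0k}<\Delta^*_{kj}$ for all $j\neq 0,k$. I would prove this by tracking the evolving vertical ordering of $\{R_k\}_{k=0}^N$ as $\Delta$ grows from $0^+$. Lemma~\ref{lem2_NRelay} fixes the initial ordering $R_0>R_1>\cdots>R_N$; Lemma~\ref{lem1_NRelay} guarantees each pair $(R_i,R_j)$ swaps exactly once, at $\Delta^*_{ij}$; and Lemma~\ref{lem3_NRelay} orders the consecutive-pair swaps as $\Delta^*_{01}\le\Delta^*_{12}\le\cdots\le\Delta^*_{(N-1)N}$. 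Combining these, an inductive bookkeeping shows that the crossings involving $R_0$ occur in the order $\Delta^*_{01}<\Delta^*_{02}<\cdots<\Delta^*_{0N}$, with $R_0$ sitting in the current ordering immediately above $R_k$ right after $\Delta^*_{0(k-1)}$, which forces its next collision with a lower curve to be $\Delta^*_{0k}$ rather than any intra-$\{R_k,\ldots,R_N\}$ swap.

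Given the auxiliary observation, the theorem follows quickly. Applying it to $R_k$ for $k\in[1:N-1]$ yields $\Delta^*_{0k}<\Delta^*_{kN}$, so $R_k>R_N$ at $\Delta^*_{0k}$, whence $R_0(\Delta^*_{0k})=R_k(\Delta^*_{0k})>R_N(\Delta^*_{0k})$. Since $R_0-R_N$ is strictly decreasing, this forces $\Delta^*_{0k}<\Delta^*_{0N}$. Therefore at $\Delta=\Delta^*_{0N}$ we have $R_k(\Delta^*_{0N})>R_0(\Delta^*_{0N})=R_N(\Delta^*_{0N})$ for every $k\in[1:N-1]$, giving $F(\Delta^*_{0N})=R_0(\Delta^*_{0N})=R_N(\Delta^*_{0N})$. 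For $\Delta<\Delta^*_{0N}$, applying the observation to $R_N$ gives $\Delta^*_{jN}>\Delta^*_{0N}>\Delta$, so $R_j(\Delta)>R_N(\Delta)$ for all $j\ne N$ and $F(\Delta)=R_N(\Delta)<R_N(\Delta^*_{0N})=F(\Delta^*_{0N})$ by strict monotonicity of $R_N$. For $\Delta>\Delta^*_{0N}$, the chain $\Delta^*_{0k}<\Delta^*_{0N}<\Delta$ gives $R_k(\Delta)>R_0(\Delta)$ for every $k\ge 1$, so $F(\Delta)=R_0(\Delta)<R_0(\Delta^*_{0N})=F(\Delta^*_{0N})$ by strict monotonicity of $R_0$.

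The principal obstacle is the inductive bookkeeping behind the auxiliary observation: Lemma~\ref{lem3_NRelay} only orders consecutive-pair crossings, whereas the claim concerns intersections with the specific curve $R_0$, which is not consecutively adjacent to most of the family in the evolving ordering. The delicate point is to rule out, at each stage, a non-$R_0$ crossing $\Delta^*_{ij}$ with $i,j\in[1:N]$ from preempting the next $\Delta^*_{0k}$; this requires combining all three lemmas to simultaneously track the position of $R_0$ and the relative positions of the curves below it as $\Delta$ grows.
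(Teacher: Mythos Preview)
Your proposal is correct and follows essentially the same route as the paper: the paper also reduces the theorem to the auxiliary observation that each $R_k$, $k\in[1:N]$, meets $R_0$ before any other curve, asserts this follows from Lemmas~\ref{lem1_NRelay}--\ref{lem3_NRelay}, and then invokes the monotonicity of $R_0$. Your writeup is in fact more detailed than the paper's, which does not spell out either the derivation of the theorem from the observation or the ``bookkeeping'' behind the observation itself.

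One small simplification worth noting: once you have established that $R_k(\Delta^*_{0N})\ge R_0(\Delta^*_{0N})=R_N(\Delta^*_{0N})$ for all $k$ (your fact~(A)), you do not need the stronger conclusions $F=R_N$ on $(0,\Delta^*_{0N}]$ and $F=R_0$ on $[\Delta^*_{0N},\infty)$. The trivial bounds $F\le R_N$ and $F\le R_0$, together with the strict monotonicity of $R_N$ and $R_0$, already give $F(\Delta)\le R_N(\Delta)<R_N(\Delta^*_{0N})=F(\Delta^*_{0N})$ for $\Delta<\Delta^*_{0N}$ and $F(\Delta)\le R_0(\Delta)<R_0(\Delta^*_{0N})=F(\Delta^*_{0N})$ for $\Delta>\Delta^*_{0N}$. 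So the full auxiliary observation is not needed---only the special cases $\Delta^*_{0k}\le\Delta^*_{0N}$ (equivalently $\Delta^*_{kN}\ge\Delta^*_{0N}$) for $k\in[1:N-1]$, which makes the inductive bookkeeping you flag somewhat lighter than you suggest.
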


\subsection{CSIR limited Hybrid DF/QMF Relaying}

\label{subsec:hybrid_diamond}

In the spirit of the hybrid scheme that we proposed for the single relay network, a CSIR-limited hybrid scheme for the diamond network is illustrated in this subsection. In this scheme, the relays individually decide whether or not they can decode the source message, depending on the incoming source-to-relay channel $\msf{h}_i$ for the $i^{th}$ relay. For the relays where the incoming channel supports decoding, the relays decode and transmit the re-encoded message. The relays that cannot decode apply QMF, i.e., they quantize, map and forward their received signals to the destination. Each relay sends a $1$-bit flag (in a possibly orthogonalized preamble, so that these do not interfere) to let the destination know whether it used DF or QMF mode of operation.

In the following, we will demonstrate that this strategy, strictly improves upon the outage performance of QMF for all possible choices of quantizer distortions. In fact, as we illustrate in the next section, simulations even demonstrate significant benefits of this hybrid scheme over the DF scheme alone, which is known to not achieve optimal DMT or constant-gap performance for all channel configurations of the diamond network.

Writing out the probability of outage for the hybrid scheme, we have:
\begin{align*}
P_\Out(R) &= \sum_{\Omega} \lbp\begin{array}{l}\Pr \lbp R > R_{\HYB} \lp \Omega_D \rp | \Omega_D = \Omega \rbp\\ . \Pr \lbp \Omega_D = \Omega \rbp \end{array}\rbp
\end{align*}where $\Omega_D$ denotes the subset of nodes that can decode the source message, and $\Omega$ denotes an arbitrary partition on the set of relays.

It is important to note that the supportable rate for the hybrid scheme, $ R_{\HYB} \lp \Omega_D \rp $ is dependent on the set of nodes that can decode. For instance, in the term corresponding to $\Omega_D = \phi$, i.e., no relay can decode, $ R_{\HYB} \lp \Omega_D = \phi \rp = R_{\QMF}$.

One can observe that the events $\lbp R > R_{\HYB} \lp \Omega_D \rp | \Omega_D = \Omega \rbp$ and $\lbp R > I_{\mathsf{D}} \lp \Omega_D = \Omega \rp \rbp $ are equivalent, where $I_{\mathsf{D}} \lp \Omega_D = \Omega \rp$ is given by the following expression (see Fig. \ref{fig:illustration_cut_hybrid_diamond} for illustration):
\begin{figure}[!h]
  \centering
  % Requires \usepackage{graphicx}
  \includegraphics[width=0.75\columnwidth]{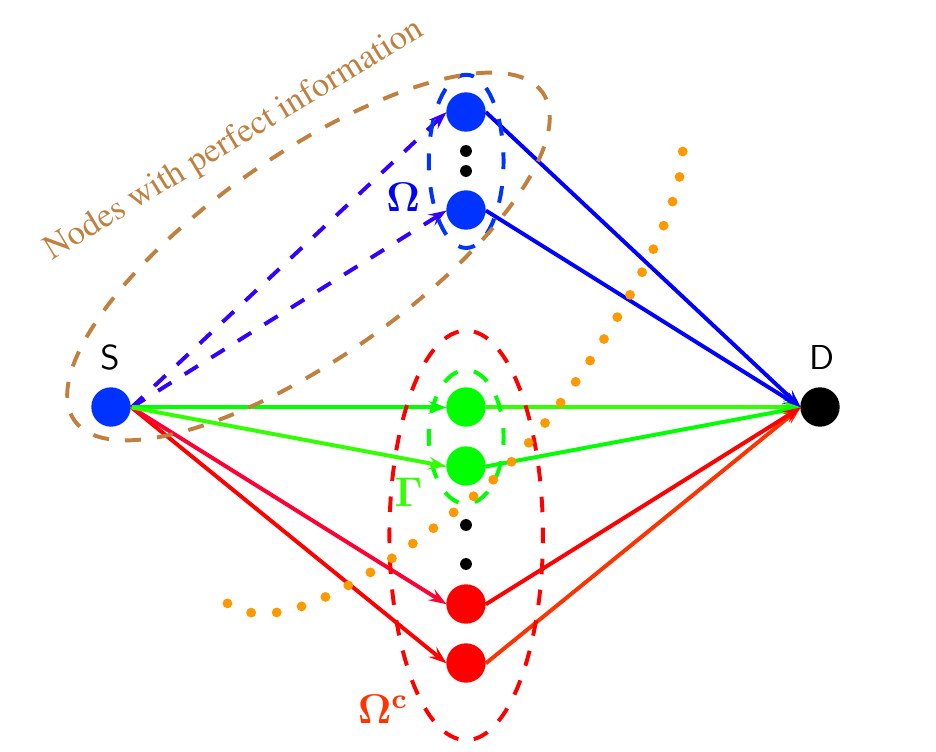}\\
  \caption{The cut in the above diagram depicts the information transfer across nodes for a given choice of $\Gamma$. $I_D(\Omega)$ is the minimum of all such cut values computed over every possible choice of $\Gamma$}\label{fig:illustration_cut_hybrid_diamond}
\end{figure}
\begin{align}
&I_{\mathsf{D}} \lp \Omega_D = \Omega \rp \notag \\
=& \min_{\Gamma}\lbp \begin{array}{l} \log \lp 1 + \sum_{i \in \lp \Omega \cup \Gamma \rp} g_i^2 \rp + \\ \log \lp 1 + \sum_{j \in \lp \Omega \cup \Gamma \rp^c} \frac{h_j^2}{1 + \Delta_j} \rp - \sum\limits_{k \in \Gamma} \log \lp \frac{1 + \Delta_k}{\Delta_k} \rp \end{array} \rbp \label{eq:rate_hybrid_diamond}
\end{align}where $\Gamma$ denotes a partition on the set $\Omega^c$.

We observe that to show HYB outperforms QMF, it is sufficient to show that $I_{\mathsf{D}} \lp \Omega \rp > R_{\QMF} \quad \forall \quad \Omega$. This is because the QMF rate expression does not depend on the choice of $\Omega_D$, the set of nodes that can decode the message.

To show this, we note that for every $\Omega$ and every $\Gamma$ (for a given $\Omega$) in eq. (\ref{eq:rate_hybrid_diamond}), we can find a partition $\Omega_{\QMF} = \Omega \cup \Gamma$ for which the QMF mutual information, $I_{\mathsf{Q}}(\Omega_{\QMF})$, is less than the corresponding term inside the minimization expression for $I_{\mathsf{D}} \lp \Omega \rp$ by an amount $\sum_{\omega \in \Omega} \log \lp \frac{1 + \Delta_{\omega}}{\Delta_{\omega}} \rp$. Also, since $R_{\QMF} = \min_{\Omega_{\QMF}}{I_{\mathsf{Q}}(\Omega_{\QMF})}$, we can conclude that $I_{\mathsf{D}} \lp \Omega \rp > R_{\QMF} \quad \forall \quad \Omega$, which proves the result.

\subsection{Numerical Evaluations}

\begin{figure}[!h]
\centering
\subfigure[All channels are i.i.d. The rate scales as: $R = 0.7\log_2(\mathsf{SNR})$]{
\includegraphics[scale=0.4]{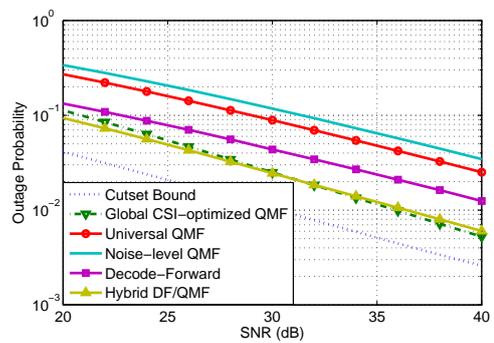}
\label{res:FD_diamond_iid}
}
\subfigure[$\mathbb{E}\{|h_1|^2\} = \mathbb{E}\{|g_2|^2\} = \frac{\mathbb{E}\{|g_1^2|\}}{25} = \frac{\mathbb{E}\{|h_2^2|\}}{10}$; $x$-axis: $\mathbb{E}\{|h_1|^2\}$, $R = 0.7\log_2(\mathsf{SNR})$
]{
\includegraphics[scale=0.4]{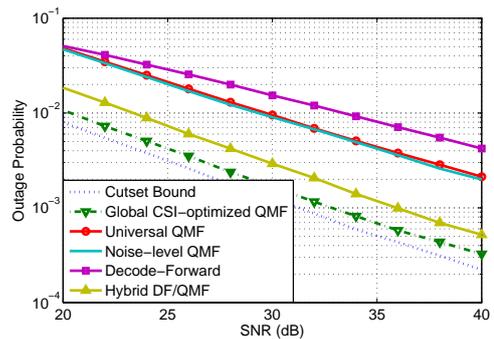}
\label{res:FD_diamond_asym}
}
\caption[]{Outage Performance in Rayleigh faded channels over the $2$-relay diamond network}
\label{res:diamond_FD}
\end{figure}

Fig. \ref{res:diamond_FD} shows the outage performance of the relaying schemes we consider for the diamond network under different channel configurations. Fig. \ref{res:FD_diamond_iid} depicts the performance for i.i.d channels, whereas Fig. \ref{res:FD_diamond_asym} looks at an antisymmetric network configuration. As we can see, in both the above settings, if global CSI of the channel strengths is available to the relays, QMF consistently provides very good performance. In the absence of global CSI however, we see that for i.i.d channels, DF provides a significant advantage over universal QMF, and the noise-level QMF is even worse. This is as per expectations, as the benefits of QMF are more pronounced in asymmetric settings, i.e., when one of the relays is closer to the source while the other is closer to the destination. Indeed, in Fig. \ref{res:FD_diamond_asym}, where we have asymmetric channel configurations, we see that universal QMF outperforms DF in the diversity order (slope of the curve) itself.

It was shown in section \ref{subsec:hybrid_diamond} that the hybrid scheme can provably outperform the universal QMF scheme for the diamond network. What is even more interesting to note from the plots is that not only does it outperform universal QMF, it also significantly outperforms DF in both i.i.d and asymmetric network configurations. It strikes a balance between extracting the multiple-input single-output (MISO) gains of the DF scheme and the optimal DMT performance of QMF to achieve excellent finite SNR performance, even rivaling that of global-CSI optimal QMF.

\appendices

\section{Optimal Quantizer for Asymmetric Fades}\label{app_asym}
When $\lambda_1 \neq \lambda_2$, we have,
\begin{align*}
Q'(\Delta)&= \frac{1}{\lp \lambda_2 - \lambda_1 \rp} \lbp \begin{array}{l}\lp \lambda_2 e^{- \lp \lambda_1 \alpha_2 + \lp \lambda_2 - \lambda_1 \rp \left\lceil \alpha_1 \right\rceil^+ \rp} \rp. \\ \lp \lp \lambda_1 - \lambda_2 \rp {\left\lceil \alpha_1 \right\rceil^+}' - \lambda_1 \alpha_2' \rp \\+ \lambda_1 \lambda_2 \alpha_2' e^{-\lambda_2 \alpha_2} \end{array}\rbp
\end{align*}

\begin{table}
  \centering
  \begin{tabular}{c c c}
  Functions & $\lim\limits_{\Delta \rightarrow 0^+}$ & $\lim\limits_{\Delta \rightarrow \infty}$ \\ \hline
  $\alpha_1 = 2^R - \frac{h^2}{1 + \Delta} - 1$ & $2^R - h^2 - 1$ & $2^R - 1$ \\
  $\alpha_2 = 2^R \lp 1 + \frac{1}{\Delta} \rp - 1$ & $\infty$ & $2^R - 1$ \\
  $\alpha_1' = \frac{h^2}{\lp 1 + \Delta \rp^2}$ & $h^2$ & $0^+$ \\
  $\alpha_2' = -\frac{2^R}{\Delta^2}$ & $-\infty$ & $0^-$
  \end{tabular}
  \caption{Limiting values of $\alpha_1, \alpha_2, \alpha_1'$ and $\alpha_2'$}\label{table:limiting values}
\end{table}

Similar to the case $\lambda_1 = \lambda_2$, we start by assuming $\alpha_1 > 0$. From the boundary values in Table \ref{table:limiting values}, we note that $Q(0) = 0$ and $Q(\infty) = e^{-\lambda_2\lp 2^R - 1 \rp}$. Moreover, we observe that $Q'(\infty)<0$, which implies that the function is decreasing just before attaining its limiting value. These conditions dictate that there must be \emph{at least} one finite local maxima of $Q(\Delta)$, and the $\Delta = \Delta^{\S}$ that maximises $Q(\Delta)$ corresponds to the global maxima. The critical points can be found efficiently using numerical methods to find the roots of $Q'(\Delta) = 0$. From plotting $Q(\Delta)$ for various parameter sets $\lp h, R, \lambda_1, \lambda_2 \rp$, we conjecture that there is \emph{exactly} one critical point of $Q(\Delta)$, but we cannot at this point prove the claim due to the general transcendental nature of the expression.

As before, if $\Delta^{\S} > \Delta_t = \frac{h^2}{2^R - 1} - 1$, our assumption on $\alpha_1$ is validated, and $\Delta^* = \Delta^{\S}$. Otherwise, we set $\left\lceil \alpha_1 \right\rceil^+ = 0$ in the $Q(\Delta)$ expression. We also note that in this case, the maximizing $\Delta^*$ will lie in the interval $\lp 0, \Delta_t \rb $. In such a case, $Q'(\Delta) = \frac{\lambda_1\lambda_2\alpha_2'}{\lambda_2 - \lambda_1}\lbp e^{-\lambda_2 \alpha_2} - e^{-\lambda_1 \alpha_2} \rbp > 0 \quad \forall \Delta \in \lp 0,\Delta_t \rb $, i.e., $Q(\Delta)$ is monotonically increasing in $\Delta$. Hence, the maximizing $\Delta^* = \Delta_t$.

Thus, similar to the case where $\lambda_1 = \lambda_2$, we have,
\begin{align*}
\Delta^* &= \max\lbp \Delta^{\S}, \Delta_t \rbp \quad \mbox{for} \quad \lambda_1 \neq \lambda_2
\end{align*}

\section{Proof of Lemma~\ref{lem_2Relay}}\label{app_Pf_lem_2Relay}
%\subsubsection{Characterization of $R_{\text{QMF,G}}^*(\Delta_2)$}
%We shall first solve for any fixed $\Delta_2>0$, what is the solution of the optimization problem $R_{\text{QMF,G}}^*(\Delta_2) = \max_{\Delta_1 > 0}\min_{\Omega} R\lp\Omega;\Delta_1,\Delta_2\rp$ and its optimal $\Delta_1^*,\Omega^*$.

Throughout this proof we work under a fixed $\Delta_2$. Hence for notational convenience, let us drop the $\Delta_2$ argument in $R\lp\Omega;\Delta_1,\Delta_2\rp$.

First observe that
\begin{align*}
R\lp \emptyset;\Delta_1\rp &:= \log\lp 1 + \frac{h_1^2}{1+\Delta_1} + \frac{h_2^2}{1+\Delta_2}\rp\\
R\lp 2;\Delta_1\rp &:= \lb\begin{array}{l} \log\lp1+g_2^2\rp + \log\lp 1 + \frac{h_1^2}{1+\Delta_1}\rp\\ - \log\lp\frac{1+\Delta_2}{\Delta_2}\rp\end{array}\rb^+
\end{align*}
are decreasing functions of $\Delta_1$, while
\begin{align*}
R\lp 1;\Delta_1\rp &:= \lb\begin{array}{l} \log\lp1+g_1^2\rp + \log\lp 1 + \frac{h_2^2}{1+\Delta_2}\rp\\ - \log\lp\frac{1+\Delta_1}{\Delta_1}\rp\end{array}\rb^+\\
R\lp 1,2;\Delta_1\rp &:= \lb\begin{array}{l} \log\lp1+g_1^2+g_2^2\rp  - \log\lp\frac{1+\Delta_1}{\Delta_1}\rp\\ - \log\lp\frac{1+\Delta_2}{\Delta_2}\rp\end{array}\rb^+
\end{align*}
are increasing functions of $\Delta_1$. Besides, within the two curves $R\lp 1;\Delta_1\rp$ and $R\lp 1,2;\Delta_1\rp$, one will be no less than the other throughout all positive $\Delta_1$. Therefore to simplify the problem, let us first find the condition for one of them to be the dominant one, and for the remaining one can focus on the relations of the other two decreasing functions with this dominant increasing curve.

\begin{align*}
&R\lp1;\Delta_1\rp \le R\lp 1,2;\Delta_1\rp\\
\iff &
\log\lp1+g_1^2\rp + \log\lp 1 + \frac{h_2^2}{1+\Delta_2}\rp - \log\lp\frac{1+\Delta_1}{\Delta_1}\rp\\
&\le
\log\lp1+g_1^2+g_2^2\rp  - \log\lp\frac{1+\Delta_1}{\Delta_1}\rp - \log\lp\frac{1+\Delta_2}{\Delta_2}\rp\\
\iff &(1+g_1^2)(1+h_2^2+\Delta_2) \le (1+g_1^2+g_2^2)\Delta_2\\
\iff &\Delta_2 \ge \frac{(1+g_1^2)(1+h_2^2)}{g_2^2} = \delta_2
\end{align*}

Below we discuss in two difference cases based on the above condition.

\subsection{$\Delta_2 \ge \delta_2$}
In this case we have
\begin{align*}
&R_{\text{QMF,G}}^*(\Delta_2) = \max_{\Delta_1 > 0}\min\lbp R\lp 1;\Delta_1\rp , R\lp \emptyset;\Delta_1\rp,R\lp 2;\Delta_1\rp\rbp
\end{align*}

Note that at the two extreme values of $\Delta_1$,
\begin{align*}
&R\lp 1;\Delta_1=0\rp = 0 \\
&R\lp 1;\Delta_1=\infty\rp = \log\lp1+g_1^2\rp + \log\lp 1 + \frac{h_2^2}{1+\Delta_2}\rp\\
&R\lp \emptyset;\Delta_1=0\rp = \log\lp 1 + h_1^2 + \frac{h_2^2}{1+\Delta_2}\rp\\
&R\lp \emptyset;\Delta_1=\infty\rp = \log\lp 1 + \frac{h_2^2}{1+\Delta_2}\rp\\
&R\lp 2;\Delta_1=0\rp =  \lb\begin{array}{l} \log\lp1+g_2^2\rp + \log\lp 1 + h_1^2\rp\\ - \log\lp\frac{1+\Delta_2}{\Delta_2}\rp\end{array}\rb^+\\
&R\lp 2;\Delta_1=\infty\rp =  \lb \log\lp1+g_2^2\rp - \log\lp\frac{1+\Delta_2}{\Delta_2}\rp\rb^+
\end{align*}

We conclude that $R\lp 1;\Delta_1\rp$ and $R\lp \emptyset;\Delta_1\rp$ will always intersect at some positive $\Delta_1$, and using the monotonicity of these curves, the intersection occurs at only one point. On the other hand $R\lp 1;\Delta_1\rp$ and $R\lp 2;\Delta_1\rp$ may not. Moreover, using the monotonicity of these curves, we have
\begin{align*}
&\forall\Delta_1>0,\ R\lp 1;\Delta_1\rp \le R\lp 2;\Delta_1\rp\\
\iff &R\lp 1;\Delta_1=\infty\rp \le R\lp 2;\Delta_1=\infty\rp\\
\iff & \log\lp1+g_1^2\rp + \log\lp 1 + \frac{h_2^2}{1+\Delta_2}\rp\\
&\quad\le \log\lp1+g_2^2\rp - \log\lp\frac{1+\Delta_2}{\Delta_2}\rp\\
\iff &\Delta_2 \ge \frac{(1+g_1^2)(1+h_2^2)}{g_2^2-g_1^2}\mbox{ and } g_2^2 > g_1^2
\end{align*}

Hence, if $g_2^2 > g_1^2$ and $\Delta_2 \ge \frac{(1+g_1^2)(1+h_2^2)}{g_2^2-g_1^2}$, then $\Delta_1^* = \Delta_1^*(1;\emptyset):=$ the intersection of $R\lp 1;\Delta_1\rp$ and $R\lp \emptyset;\Delta_1\rp$:
\begin{align*}
\Delta_1^*(1;\emptyset) := \frac{(1+h_1^2)\Delta_2 + (1+h_1^2+h_2^2)}{g_1^2(\Delta_2+(1+h_2^2))}
\end{align*}

Otherwise, we use the monotonicity of these curves to arrive at
$
\Delta_1^* = \min\lbp \Delta_1^*(1;\emptyset), \Delta_1^*(1;2)\rbp
$,
where $\Delta_1^*(1;2):=$ the intersection of $R\lp 1;\Delta_1\rp$ and $R\lp 2;\Delta_1\rp$:
\begin{align*}
\Delta_1^*(1;2) := \frac{(1+g_1^2)(1+h_1^2)\Delta_2}{(g_1^2-g_2^2)\Delta_2+(1+g_1^2)(1+h_2^2)}
\end{align*}

Next we introduce the following claim, the proof of which is at the end of this section.
\begin{claim}\label{claim_Dominance}
Within the range $\Delta_2 \ge \delta_2$, we always have $\Delta_1^*(1;2) > \Delta_1^*(1;\emptyset)$.
\end{claim}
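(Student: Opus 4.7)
The plan is to reduce the claim to a pointwise inequality between the two \emph{decreasing} curves $R(\emptyset;\Delta_1)$ and $R(2;\Delta_1)$. Since this appendix has already established that $R(1;\Delta_1)$ is strictly increasing in $\Delta_1$ while both $R(\emptyset;\Delta_1)$ and $R(2;\Delta_1)$ are strictly decreasing, the ordering of the intersections $\Delta_1^*(1;\emptyset)$ and $\Delta_1^*(1;2)$ is entirely determined by which decreasing curve sits uniformly above the other. Specifically, if I can show
\[
R(2;\Delta_1) > R(\emptyset;\Delta_1) \text{ for every } \Delta_1 > 0, \text{ whenever } \Delta_2 \ge \delta_2,
\]
then at $\Delta_1 = \Delta_1^*(1;\emptyset)$ the value $R(1;\cdot)=R(\emptyset;\cdot)$ is strictly below $R(2;\cdot)$, and by monotonicity the rising curve $R(1;\cdot)$ must travel further to the right before meeting $R(2;\cdot)$, giving $\Delta_1^*(1;2) > \Delta_1^*(1;\emptyset)$.

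To establish the pointwise inequality I would set $f(\Delta_1) := R(2;\Delta_1) - R(\emptyset;\Delta_1)$ and carry out two short steps. First, I would verify that $f$ is non-increasing in $\Delta_1$: the only $\Delta_1$-dependent pieces are $\log(1+a)$ inside $R(2)$ and $\log(1+a+b)$ inside $R(\emptyset)$ with $a := h_1^2/(1+\Delta_1)$ and $b := h_2^2/(1+\Delta_2)$, and a one-line derivative computation yields $f'(\Delta_1) = a'\,b/[(1+a)(1+a+b)] \le 0$. Second, I would evaluate
\[
\lim_{\Delta_1\to\infty} f(\Delta_1) \;=\; \log\frac{(1+g_2^2)\Delta_2}{1+\Delta_2+h_2^2},
\]
which is strictly positive precisely when $g_2^2\Delta_2 > 1+h_2^2$, i.e., $\Delta_2 > (1+h_2^2)/g_2^2$. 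This is exactly where the hypothesis bites: $\delta_2 = (1+g_1^2)(1+h_2^2)/g_2^2$ strictly exceeds $(1+h_2^2)/g_2^2$ (the factor $(1+g_1^2)>1$ supplies the slack), so $\Delta_2 \ge \delta_2$ forces the limit to be positive, and combined with monotonicity of $f$ this yields $f(\Delta_1) > 0$ for all positive $\Delta_1$.

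The part I expect to be a little fiddly is not the calculus but the bookkeeping around two edge cases. The first is the $[\cdot]^+$ cap built into $R(2;\Delta_1)$; it turns out to be inactive throughout, because $R(\emptyset;\Delta_1) > 0$ automatically and the inequality $R(2) > R(\emptyset)$ then forces $R(2)$ to be strictly positive before any capping. The second is the sub-regime noted in the text preceding the claim, where $R(1;\cdot)$ never catches up to $R(2;\cdot)$ on $(0,\infty)$ (possible when $g_2^2 > g_1^2$ and $\Delta_2$ is sufficiently large); there $\Delta_1^*(1;2)$ fails to exist as a finite positive number, but interpreting it as $+\infty$ keeps the claim valid and is consistent with the preceding discussion's convention $\min\{\Delta_1^*(1;\emptyset),\Delta_1^*(1;2)\} = \Delta_1^*(1;\emptyset)$.
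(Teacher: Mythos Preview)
Your argument is correct and takes a genuinely different route from the paper's proof. The paper proceeds by brute-force algebra: it assumes $\Delta_1^*(1;2) \le \Delta_1^*(1;\emptyset)$, cross-multiplies the two closed-form expressions to obtain a quadratic $f(\Delta_2)$, and then substitutes the hypothesis $g_2^2\Delta_2 \ge (1+g_1^2)(1+h_2^2)$ to show $f(\Delta_2) > 0$, reaching a contradiction. Your approach is instead geometric: you never touch the explicit formulas for $\Delta_1^*(1;\emptyset)$ and $\Delta_1^*(1;2)$, but exploit the monotonicity structure already set up in the appendix to reduce the claim to the pointwise comparison $R(2;\Delta_1) > R(\emptyset;\Delta_1)$, which you verify via a one-line derivative and a limit at $\Delta_1\to\infty$. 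The threshold $\Delta_2 > (1+h_2^2)/g_2^2$ that emerges from your limit computation is exactly the same one that drives the paper's quadratic bound, so the two proofs use the hypothesis $\Delta_2\ge\delta_2$ in the same essential way; yours simply makes the geometry transparent and, as you note, automatically handles both the $[\cdot]^+$ cap on $R(2;\cdot)$ and the sub-regime where the curves $R(1;\cdot)$ and $R(2;\cdot)$ never intersect, whereas the paper's algebraic manipulation tacitly relies on the denominator of $\Delta_1^*(1;2)$ being positive when cross-multiplying.
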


Therefore, $\Delta_1^* = \Delta_1^*(1;\emptyset)$ is the optimal solution for $\Delta_2 \ge \delta_2$, and $\Omega^* = \{1\}$ or $\Omega^* = \emptyset$.

\subsection{$\Delta_2 < \delta_2$}
In this case we have 
\begin{align*}
R_{\text{QMF,G}}^*(\Delta_2) = \max_{\Delta_1 > 0}\min\lbp R\lp 1,2;\Delta_1\rp , R\lp \emptyset;\Delta_1\rp,R\lp 2;\Delta_1\rp\rbp
\end{align*}

Again, note that at the two extreme values of $\Delta_1$,
\begin{align*}
&R\lp 1,2;\Delta_1=0\rp = 0 \\
&R\lp 1,2;\Delta_1=\infty\rp = \lb\log\lp1+g_1^2+g_2^2\rp- \log\lp\frac{1+\Delta_2}{\Delta_2}\rp\rb^+\\
&R\lp \emptyset;\Delta_1=0\rp = \log\lp 1 + h_1^2 + \frac{h_2^2}{1+\Delta_2}\rp\\
&R\lp \emptyset;\Delta_1=\infty\rp = \log\lp 1 + \frac{h_2^2}{1+\Delta_2}\rp\\
&R\lp 2;\Delta_1=0\rp =  \lb\begin{array}{l} \log\lp1+g_2^2\rp + \log\lp 1 + h_1^2\rp\\ - \log\lp\frac{1+\Delta_2}{\Delta_2}\rp\end{array}\rb^+\\
&R\lp 2;\Delta_1=\infty\rp =  \lb \log\lp1+g_2^2\rp - \log\lp\frac{1+\Delta_2}{\Delta_2}\rp\rb^+
\end{align*}

We conclude that $R\lp 1,2;\Delta_1\rp$ and $R\lp 2;\Delta_1\rp$ will always intersect at some positive $\Delta_1$, and using the monotonicity of these curves, the intersection occurs at only one point. On the other hand $R\lp 1,2;\Delta_1\rp$ and $R\lp \emptyset;\Delta_1\rp$ may not. Moreover, using the monotonicity of these curves, we have
\begin{align*}
&\forall\Delta_1>0,\ R\lp 1,2;\Delta_1\rp \le R\lp \emptyset;\Delta_1\rp\\
\iff &R\lp 1,2;\Delta_1=\infty\rp \le R\lp \emptyset;\Delta_1=\infty\rp\\
\iff & \log\lp1+g_1^2+g_2^2\rp - \log\lp\frac{1+\Delta_2}{\Delta_2}\rp\\
&\quad \le \log\lp 1 + \frac{h_2^2}{1+\Delta_2}\rp\\
\iff &\Delta_2 \le \frac{1+h_2^2}{g_1^2+g_2^2}
\end{align*}

Hence, if $\Delta_2 \le \frac{1+h_2^2}{g_1^2+g_2^2}$, then $\Delta_1^* = \Delta_1^*(1,2;2):=$ the intersection of $R\lp 1,2;\Delta_1\rp$ and $R\lp 2;\Delta_1\rp$:
\begin{align*}
\Delta_1^*(1,2;2) := \frac{(1+g_2^2)(1+h_1^2)}{g_1^2}
\end{align*}

Otherwise, we use the monotonicity of these curves to arrive at
$
\Delta_1^* = \min\lbp \Delta_1^*(1,2;\emptyset), \Delta_1^*(1,2;2)\rbp
$,
where $\Delta_1^*(1,2;\emptyset):=$ the intersection of $R\lp 1,2;\Delta_1\rp$ and $R\lp\emptyset;\Delta_1\rp$:
\begin{align*}
\Delta_1^*(1,2;\emptyset) := \frac{(1+h_1^2)\Delta_2+(1+h_1^2+h_2^2)}{(g_1^2+g_2^2)\Delta_2-(1+h_2^2)}
\end{align*}

Below we derive the necessary and sufficient condition for $\Delta_1^*(1,2;\emptyset) \ge \Delta_1^*(1,2;2)$:
\begin{align*}
&\Delta_1^*(1,2;\emptyset) \ge \Delta_1^*(1,2;2)\\
\iff & g_1^2\lp(1+h_1^2)\Delta_2 + (1+h_1^2+h_2^2)\rp\\
&\quad \ge (1+g_2^2)(1+h_1^2)(g_1^2+g_2^2)\lp \Delta_2 - (1+h_2^2)\rp\\
\iff &\Delta_2\le \frac{(1+g_1^2+g_2^2)(1+h_1^2+h_2^2) + (1+g_2^2)h_1^2h_2^2}{g_2^2(1+g_1^2+g_2^2)}\\
&=\delta_1
\end{align*}

The following claim concludes the discussion of this case.
\begin{claim}\label{claim_Order}
For any nonzero $\{h_1,h_2,g_1,g_2\}$,
\begin{align*}
\frac{1+h_2^2}{g_1^2+g_2^2} < \delta_1 < \delta_2
\end{align*}
\end{claim}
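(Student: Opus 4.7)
The plan is to prove each of the two inequalities $\frac{1+h_2^2}{g_1^2+g_2^2} < \delta_1$ and $\delta_1 < \delta_2$ directly by reducing it to the strict positivity of an explicit polynomial in $g_1,g_2,h_1,h_2$. Since all the factors appearing in the denominators, namely $g_2^2$, $(1+g_1^2+g_2^2)$, $(1+h_1^2)$, and $(g_1^2+g_2^2)$, are strictly positive under the nonzero assumption, one may cross-multiply freely without having to track a sign flip.

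For the upper bound $\delta_1 < \delta_2$, I would form the common multiplier $g_2^2(1+g_1^2+g_2^2)(1+h_1^2)$ and look at $g_2^2(1+g_1^2+g_2^2)(1+h_1^2)(\delta_2-\delta_1)$. Using the identity
\[
(1+g_1^2)(1+h_1^2)(1+h_2^2) - (1+h_1^2+h_2^2) = h_1^2 h_2^2 + g_1^2(1+h_1^2)(1+h_2^2),
\]
(which is the only nontrivial algebraic rearrangement needed), grouping the $h_1^2 h_2^2$ terms collapses the coefficient $(1+g_1^2+g_2^2)-(1+g_2^2)$ to $g_1^2$, and the expression reduces to
\[
g_1^2\bigl[\,h_1^2 h_2^2 \;+\; (1+g_1^2+g_2^2)(1+h_1^2)(1+h_2^2)\,\bigr],
\]
which is manifestly strictly positive since $g_1 \neq 0$.

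For the lower bound $\frac{1+h_2^2}{g_1^2+g_2^2} < \delta_1$, I would clear denominators via the common multiplier $g_2^2(1+g_1^2+g_2^2)(1+h_1^2)(g_1^2+g_2^2)$ and examine the sign of the resulting expression. Expanding $(1+h_1^2)(1+h_2^2) = (1+h_1^2+h_2^2) + h_1^2 h_2^2$ splits the LHS into two pieces, after which the coefficient of $(1+g_1^2+g_2^2)(1+h_1^2+h_2^2)$ is $(g_1^2+g_2^2)-g_2^2 = g_1^2$ and the coefficient of $h_1^2 h_2^2$ is $(g_1^2+g_2^2)(1+g_2^2)-g_2^2(1+g_1^2+g_2^2) = g_1^2$. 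So the difference reduces to
\[
g_1^2\bigl[\,(1+g_1^2+g_2^2)(1+h_1^2+h_2^2) \;+\; h_1^2 h_2^2\,\bigr],
\]
again manifestly strictly positive. Combining both inequalities gives $\frac{1+h_2^2}{g_1^2+g_2^2} < \delta_1 < \delta_2$, as claimed.

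The main obstacle is purely bookkeeping: the four-variable polynomials have many terms, so it is important to choose the right grouping so that the telescoping of the $g_2^2$-factors and of the $h_1^2 h_2^2$-factors is visible and one ends up with a clean factor of $g_1^2$ out front. No new ideas beyond elementary algebra are needed, and the identities used above are the only steps that benefit from being stated explicitly before clearing denominators.
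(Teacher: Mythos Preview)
Your proof is correct and follows essentially the same approach as the paper: clear denominators, expand using $(1+h_1^2)(1+h_2^2) = (1+h_1^2+h_2^2) + h_1^2h_2^2$, and verify positivity of the resulting polynomial. Your version is marginally cleaner in that you factor out $g_1^2$ explicitly to exhibit the difference as a manifestly positive quantity, whereas the paper argues by comparing the coefficients of the $h_1^2h_2^2$ terms on each side; the algebra is otherwise identical.
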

%Hence, within the range $\frac{1+h_2^2}{g_1^2+g_2^2} \le\Delta_2\le \delta_2$,
%\begin{align*}
%%\Delta_1^* = \lbp\begin{array}{ll}
%%\Delta_1^*(\msf{S}\msf{A}_1\msf{A}_2;\msf{S}) &\mbox{if }\Delta_2 \ge \max\lp \frac{(1+g_1^2+g_2^2)(1+h_1^2+h_2^2) + (1+g_2^2)h_1^2h_2^2}{g_2^2(1+g_1^2+g_2^2)}, \frac{1+h_2^2}{g_1^2+g_2^2}\rp\\
%%\Delta_1^*(\msf{S}\msf{A}_1\msf{A}_2;\msf{S}\msf{A}_2) &\mbox{if }\Delta_2 \le \min\lp \frac{(1+g_1^2+g_2^2)(1+h_1^2+h_2^2) + (1+g_2^2)h_1^2h_2^2}{g_2^2(1+g_1^2+g_2^2)}, \frac{(1+g_1^2)(1+h_2^2)}{g_2^2}\rp\\
%%\end{array}\right.
%\Delta_1^* = \lbp\begin{array}{ll}
%\Delta_1^*(1,2;2) &\mbox{if }\Delta_2 \ge \max\lp \frac{1+h_2^2}{g_1^2+g_2^2},\delta_1\rp\\
%\Delta_1^*(1,2;\emptyset) &\mbox{if }\Delta_2 \le \min\lp \delta_1,\delta_2\rp
%\end{array}\right.
%\end{align*}

Therefore, for $0<\Delta_2 \ge \delta_1$, $\Delta_1^* = \Delta_1^*(1,2;2)$ is the optimal solution, and $\Omega^* = \{1\}$ or $\Omega^* = \{2\}$. For $\delta_1\le\Delta_2 < \delta_2$, $\Delta_1^* = \Delta_1^*(1,2;\emptyset)$ is the optimal solution, and $\Omega^* = \{1,2\}$ or $\Omega^* = \emptyset$.

Combining the above two cases for $\Delta_2 \ge \delta_2$ and $\Delta_2 < \delta_2$, we have the complete characterization of $R_{\text{QMF,G}}^*(\Delta_2)$.
%\begin{align}
%\mcal{I}_1 &:= \lb\frac{(1+g_1^2)(1+h_2^2)}{g_2^2},\infty\rp\\
%\mcal{I}_2 &:= \lp 0, \frac{1+h_2^2}{g_1^2+g_2^2}\rp\\
%\mcal{I}_3 &:= \lb \frac{1+h_2^2}{g_1^2+g_2^2} , \frac{(1+g_1^2+g_2^2)(1+h_1^2+h_2^2) + (1+g_2^2)h_1^2h_2^2}{g_2^2(1+g_1^2+g_2^2)}\rp\\
%\mcal{I}_4 &:= \lb \frac{(1+g_1^2+g_2^2)(1+h_1^2+h_2^2) + (1+g_2^2)h_1^2h_2^2}{g_2^2(1+g_1^2+g_2^2)}, \frac{(1+g_1^2)(1+h_2^2)}{g_2^2}\rp
%\end{align}
%\begin{itemize}
%\item
%For $\Delta_2\in\mcal{I}_1$,
%\begin{align}
%&\Delta_1^* = \Delta_1^*(\msf{S}\msf{A}_1;\msf{S}) = \frac{(1+h_1^2)\Delta_2 + (1+h_1^2+h_2^2)}{g_1^2(\Delta_2+(1+h_2^2))}\\
%&\Omega^* = \{\msf{S}, \msf{A}_1\}\mbox{ or } \{\msf{S}\}
%\end{align}
%
%\item
%For $\Delta_2\in\mcal{I}_2\cup\mcal{I}_3$,
%\begin{align}
%&\Delta_1^* = \Delta_1^*(\msf{S}\msf{A}_1\msf{A}_2;\msf{S}\msf{A}_2) = \frac{(1+g_2^2)(1+h_1^2)}{g_1^2}\\
%&\Omega^* = \{\msf{S}, \msf{A}_1,\msf{A}_2\}\mbox{ or } \{\msf{S},\msf{A}_2\}
%\end{align}
%
%\item
%For $\Delta_2\in\mcal{I}_4$,
%\begin{align}
%&\Delta_1^* = \Delta_1^*(\msf{S}\msf{A}_1\msf{A}_2;\msf{S}) = \frac{(1+h_1^2)\Delta_2+(1+h_1^2+h_2^2)}{(g_1^2+g_2^2)\Delta_2-(1+h_2^2)}\\
%&\Omega^* = \{\msf{S}, \msf{A}_1,\msf{A}_2\}\mbox{ or } \{\msf{S}\}
%\end{align}
%\end{itemize}

\subsection{Proof of Claim~\ref{claim_Dominance}}
Assume the contrary, that $\Delta_1^*(1;2) \le \Delta_1^*(1;\emptyset)$. After some manipulations, we have
\begin{align*}
&\Delta_1^*(1;2) \le \Delta_1^*(1;\emptyset) \iff\\
& \lbp\begin{array}{l} (1+h_1^2)g_2^2(1+g_1^2)\Delta_2^2\\
+ \lb\begin{array}{l} (1+h_1^2+h_2^2)(g_2^2-g_1^2)\\ + (g_1^2g_2^2-1)(1+h_1^2)(1+h_2^2)\end{array}\rb \Delta_2\\
- (1+g_1^2)(1+h_2^2)(1+h_1^2+h_2^2) \end{array}\rbp \le 0
\end{align*}

Denote this quadratic function of $\Delta_2$ by $f(\Delta_2)$. Now lets plug in $g_2^2\Delta_2 \ge (1+g_1^2)(1+h_2^2)$ to give a lower bound on $f(\Delta_2)$:
\begin{align*}
&f(\Delta_2)\\
&\ge
g_1^2\lb (1+h_1^2)(1+h_2^2)(2+g_1^2+g_2^2) - (1+h_1^2+h_2^2)\rb \Delta_2\\
& > 0
\end{align*}
which leads to contradiction.

\subsection{Proof of Claim~\ref{claim_Order}}
First of all,
\begin{align*}
&\delta_1 < \delta_2 \\
\iff &\frac{(1+g_1^2+g_2^2)(1+h_1^2+h_2^2) + (1+g_2^2)h_1^2h_2^2}{g_2^2(1+g_1^2+g_2^2)(1+h_1^2)}\\
& < \frac{(1+g_1^2)(1+h_2^2)}{g_2^2}\\
\iff & (1+g_1^2+g_2^2)(1+h_1^2+h_2^2) + (1+g_2^2)h_1^2h_2^2\\
& < (1+g_1^2+g_2^2)(1+h_1^2)(1+g_1^2)(1+h_2^2)\\
\iff & (1+g_2^2)h_1^2h_2^2 < (1+g_1^2+g_2^2)(1+g_1^2)h_1h_2^2\\
&\hspace{72pt} + g_1^2(1+g_1^2+g_2^2)(1+h_1^2+h_2^2)
\end{align*}
which is obviously true since $(1+g_2^2) < (1+g_1^2+g_2^2) < (1+g_1^2+g_2^2)(1+g_1^2)$.

Second,
\begin{align*}
&\frac{1+h_2^2}{g_1^2+g_2^2} < \delta_1\\
%\iff & \frac{1+h_2^2}{g_1^2+g_2^2} < \frac{(1+g_1^2+g_2^2)(1+h_1^2+h_2^2) + (1+g_2^2)h_1^2h_2^2}{g_2^2(1+g_1^2+g_2^2)(1+h_1^2)}\\
\iff & g_2^2(1+g_1^2+g_2^2)(1+h_1^2)(1+h_2^2)\\
&< (g_1^2+g_2^2)(1+g_1^2+g_2^2)(1+h_1^2+h_2^2)\\
&\quad + (g_1^2+g_2^2)(1+g_2^2)h_1^2h_2^2\\
\iff & g_2^2(1+g_1^2+g_2^2)h_1^2h_2^2 \\
&< g_1^2(1+g_1^2+g_2^2)(1+h_1^2+h_2^2) \\
&\quad + (g_1^2+g_2^2)(1+g_2^2)h_1^2h_2^2,
\end{align*}
which is obviously true since $g_2^2(1+g_1^2+g_2^2) < (g_1^2+g_2^2)(1+g_2^2)$.

\section{Proof of Theorem~\ref{thm_2Relay}}\label{app_Pf_thm_2Relay}
%\begin{proof}
%\subsubsection{Optimization over $\Delta_2$}
%Recall that 
%\begin{align*}
%R_{\text{QMF,G}} &= \max_{\Delta_2 > 0} R_{\text{QMF,G}}^*(\Delta_2)
%\end{align*}

By the characterization in Lemma~\ref{lem_2Relay}, \eqref{eq_2Relay_G_Opt} can be solved by finding the optimal solution in each of the above three ranges $\mcal{I}_1,\mcal{I}_2,\mcal{I}_3$ of $\Delta_2$ analytically, and then find the maximum of these three.

For $\Delta_2\in\mcal{I}_3$, note that $\Delta_1^*$ is an increasing function of $\Delta_2$ and that $R\lp\emptyset;\Delta_1,\Delta_2\rp$ decreases when both $\Delta_1$ and $\Delta_2$ increase. Hence we conclude that $R_{\text{QMF,G}}^*(\Delta_2)$ is a decreasing function in this range. Hence
\begin{align*}
%\Delta_2^* &= {\arg\max}_{\Delta_2 \in\mcal{I}_1} R_{\text{QMF,G}}^*(\Delta_2) = \delta_1\\
%\Delta_1^* &= \frac{(1+h_1^2)\delta_1 + (1+h_1^2+h_2^2)}{g_1^2(\delta_1+(1+h_2^2))}
\Delta_2^*  &= \delta_2 &
\Delta_1^* &= \frac{(1+h_1^2)\delta_2 + (1+h_1^2+h_2^2)}{g_1^2(\delta_2+(1+h_2^2))}
\end{align*}
in this range.
%is an optimal solution.

For $\Delta_2\in\mcal{I}_1$, note that $\Delta_1^*$ does not depend on $\Delta_2$ and that $R\lp1,2;\Delta_1,\Delta_2\rp$ increases when $\Delta_2$ increases. Hence we conclude that $R_{\text{QMF,G}}^*(\Delta_2)$ is an increasing function in this range. Hence
\begin{align*}
\Delta_2^* &=  \delta_1 &
\Delta_1^* &= \frac{(1+g_2^2)(1+h_1^2)}{g_1^2}
\end{align*}
in this range.
%is an optimal solution.

For $\Delta_2\in\mcal{I}_2$, unlike the previous two cases, $R_{\text{QMF,G}}^*(\Delta_2)$ may not be monotone in this case. 
Here
\begin{align*}
&R_{\text{QMF,G}}^*(\Delta_2) = \log\lp 1+\frac{h_1^2}{1+\Delta_1^*} + \frac{h_2^2}{1+\Delta_2}\rp\\
&=\log\lp1+g_1^2+g_2^2\rp - \log\lp\frac{1+\Delta_1^*}{\Delta_1^*}\rp - \log\lp \frac{1+\Delta_2}{\Delta_2}\rp
\end{align*}
The derivative of the above function with respect to $\Delta_2$ has the same sign as the quadratic function
$
q(\Delta_2):=A\Delta_2^2 + B\Delta_2 + C
$,
where
\begin{align*}
A &= h_1^2(1+h_1^2) - h_2^2(1+h_1^2+g_1^2+g_2^2) \\
B &= 2h_1^2(1+h_1^2)  \hspace{36pt}
C := h_1^2(1+h_1^2+h_2^2)
\end{align*}
If $A \ge 0$, then the above quadratic is always positive, implying that $R_{\text{QMF,G}}^*(\Delta_2)$ is an increasing function. Therefore
\begin{align*}
\Delta_2^* &= \delta_2 \\
\Delta_1^* &= \frac{(1+h_1^2)\delta_2+(1+h_1^2+h_2^2)}{(g_1^2+g_2^2)\delta_2-(1+h_2^2)}\\
&= \frac{(1+h_1^2)\delta_2 + (1+h_1^2+h_2^2)}{g_1^2(\delta_2+(1+h_2^2))}
\end{align*}
is the optimal solution.

If $A < 0$, since $q(0)=C >0$ and $q(\infty) < 0$, it has a only one positive root
\begin{align*}
\delta_3 := \frac{-B-\sqrt{B^2-4AC}}{2A}
\end{align*}

If $\delta_3 \in\mcal{I}_3$ then 
\begin{align*}
\Delta_2^* &= \delta_3 &
\Delta_1^* &= \frac{(1+h_1^2)\delta_3+(1+h_1^2+h_2^2)}{(g_1^2+g_2^2)\delta_3-(1+h_2^2)}
\end{align*}
is the optimal solution.

If $\delta_3 < \delta_1$, that is, $\delta_3\in\mcal{I}_1$, then $q(\Delta_2) < 0$ for $\Delta_2\in\mcal{I}_3$ and hence $R_{\text{QMF,G}}^*(\Delta_2)$ is a decreasing function of $\Delta_2$ in this range. Therefore,  
\begin{align*}
\Delta_2^* &= \delta_1\\
\Delta_1^* &= \frac{(1+h_1^2)\delta_1+(1+h_1^2+h_2^2)}{(g_1^2+g_2^2)\delta_1-(1+h_2^2)}\\
&= \frac{(1+g_2^2)(1+h_1^2)}{g_1^2}
\end{align*}
is the optimal solution.

If $\delta_3 \ge \delta_2$, that is, $\delta_3\in\mcal{I}_3$, then $q(\Delta_2) > 0$ for $\Delta_2\in\mcal{I}_3$ and hence $R_{\text{QMF,G}}^*(\Delta_2)$ is an increasing function of $\Delta_2$ in this range. Therefore, 
\begin{align*}
\Delta_2^* &= \delta_2 \\
\Delta_1^* &= \frac{(1+h_1^2)\delta_2+(1+h_1^2+h_2^2)}{(g_1^2+g_2^2)\delta_2-(1+h_2^2)}\\
&= \frac{(1+h_1^2)\delta_2 + (1+h_1^2+h_2^2)}{g_1^2(\delta_2+(1+h_2^2))}
\end{align*}
is the optimal solution.
%\end{proof}

\section{Proof of Lemmas~\ref{lem1_NRelay},\ref{lem2_NRelay} and \ref{lem3_NRelay}}\label{app_Pf_lem1_NRelay}
\subsection{Proof of Lemma \ref{lem1_NRelay}}
Without loss of generality, let $N \geq i > j \geq 0$.
%The solutions to $R_i(\Delta) = R_j(\Delta)$ satisfy
Note that
\begin{align*}
\begin{array}{l}
\log_2 \lbp \lp1 + \frac{(N-i)h^2}{1+\Delta}\rp\lp1+ig^2\rp\lp\frac{\Delta}{1+\Delta}\rp^i \rbp \\ = \log_2 \lbp \lp1 + \frac{(N-j)h^2}{1+\Delta}\rp\lp1+jg^2\rp\lp\frac{\Delta}{1+\Delta}\rp^j \rbp
\end{array}
\end{align*}
if and only if $f(\Delta)=0$, where
%Raising both sides of the above equation to the power of $2$, we see that the solutions to $R_i(\Delta) = R_j(\Delta)$ are the roots of a polynomial $f(\Delta)$ which, on simplification, is as follows:
\begin{align*}
\begin{array}{l}
f(\Delta) = \\
\lp 1-\frac{1+jg^2}{1+ig^2} \rp\Delta^{i-j+1} +\\
\lp 1+(N-i)h^2  - \lp \frac{1+jg^2}{1+ig^2} \rp \lp 1+(N-j)h^2 + i-j \rp \rp \Delta^{i-j} \\
- \lp \frac{1+jg^2}{1+ig^2} \rp \sum_{p = 1}^{i-j-1} \lbp {{i-j}\choose p} \lp 1+(N-j)h^2 \rp + {{i-j}\choose{p-1}} \rbp \Delta^p\\
-\lp \frac{1+jg^2}{1+ig^2} \rp \lp 1+(N-j)h^2 \rp
\end{array}
\end{align*}

We note that in $f(\Delta)$, the coefficient of $\Delta^{i-j+1}$ is positive and that the coefficients of $\Delta^p_{\{p \in [0:i-j-1]\}}$ are negative. The coefficient of $\Delta^{i-j}$ may be positive or negative, depending on the channel configurations. Either way, the number of \emph{sign changes} of the coefficients of $f(\Delta)$ when written in descending order of powers is exactly $1$. By the Descartes' sign scheme, the number of positive roots of such a polynomial equation is given by $\alpha-2m$, where $\alpha$ is the number of sign changes and $m$ is a positive integer. Since $f(\Delta)$ has \emph{exactly} $1$ sign change, $f(\Delta)$ has \emph{exactly} $1$ positive root, which proves the lemma.

\subsection{Proof of Lemma \ref{lem2_NRelay}}

For $i \in [0:N-1]$, $R_i(\Delta) - R_{i+1}(\Delta)$ can be simplified as,
%\begin{align*}
%R_i(\Delta) - R_{i+1}(\Delta) =
%\lbp\begin{array}{l}
%\log \lbp \lp1 + \frac{(N-i)h^2}{1+\Delta}\rp\lp1+ig^2\rp\lp\frac{\Delta}{1+\Delta}\rp^i \rbp\\
%- \log \lbp \lp1 + \frac{(N-i-1)h^2}{1+\Delta}\rp\lp1+(i+1)g^2\rp\lp\frac{\Delta}{1+\Delta}\rp^{i+1} \rbp
%\end{array}\rbp
%\end{align*}
\begin{align*}
\begin{array}{l}
R_i(\Delta) - R_{i+1}(\Delta) = \\
\log_2 \lp \frac{1+\Delta}{\Delta}\rp + \log_2 \lp \frac{1 + \Delta + (N-i)h^2}{1 + \Delta + (N-i-1)h^2} \rp + \log_2 \lp \frac{1 + ig^2}{1 + (i+1)g^2} \rp
\end{array}
\end{align*}
From the above, we note that for finite $h$ and $g$, $\lim_{\Delta\rightarrow 0}\lbp R_i(\Delta) - R_{i+1}(\Delta) \rbp = +\infty$ due to the presence of the term $\log \lp \frac{1+\Delta}{\Delta}\rp$ in the expression for $R_i(\Delta) - R_{i+1}(\Delta)$, which proves the lemma.

\subsection{Proof of Lemma \ref{lem3_NRelay}}

We define the sequence of polynomials $f_m(\Delta) = \frac{(1 +\Delta)^{m+2}}{\Delta^m}\lp 2^{R_{m+1}(\Delta)} - 2^{R_m(\Delta)} \rp $. Further, let $\Delta^{*}_m$ denote the unique positive root of $f_m(\Delta)$ (which is also the unique positive solution of $R_m(\Delta) = R_{m+1}(\Delta)$). Then, $\Delta^{*}_m$ satisfies
\begin{align*}
\begin{array}{l}
g^2{\Delta^*_m}^2\\
+ \lp g^2 \lp 1 + (N-m)h^2 \rp - 1 - h^2 - g^2h^2 + 2mg^2 \rp{\Delta^*_m} \\
- \lp 1+mg^2 \rp \lp 1 + (N-m)h^2 \rp = 0
\end{array}
\end{align*}
For the required condition to hold true, i.e, $\Delta^{*}_m \leq \Delta^{*}_{m+1}$, we must have $f_{m+1}(\Delta^{*}_m) < 0$, since $R_{m+1}(\Delta) > R_{m+2}(\Delta)$ for $\Delta \in (0,\Delta^{*}_{m+1})$ by the preceding lemma. For $f_{m+1}(\Delta^{*}_m) < 0$,
\begin{align*}
\Delta^{*}_m \geq \frac{h^2\lp1 + (m+1)g^2\rp - g^2\lp1 + (N-m)h^2\rp}{g^2(2 + h^2)} = \Delta_t. %\mbox{ say.}
\end{align*}
What now remains to be shown in order for the lemma to hold is that $f_m(\Delta_t)<0$ for all values of $h,g,N$ and all $m \in [0:N-1]$. Substituting the value of $\Delta_t$, we obtain,
\begin{align*}
\begin{array}{l}
f_m(\Delta_t) = -\frac{h^6}{(2+h^2)^2}\lb \underbrace{\begin{array}{l}m^2g^2 + m\lp2+(3-2N)g^2\rp+\\g^2(N-1)^2-(N-2)+\frac{1}{g^2}\end{array}}_{\Gamma}\rb -\\
\frac{h^4}{(2+h^2)^2}\lb N+2(m+1)+g^2(N^2+2m-1)+\frac{2}{g^2}\rb -\\
\frac{h^2}{(2+h^2)^2}\lb2N+3+4m+g^2\lp4m^2+4m+(2N-1)\rp+\frac{2}{g^2}\rb\\
-\frac{1}{(2+h^2)^2}\lb2+g^2\rb
\end{array}
\end{align*}

From the above expression for $f_m(\Delta_t)$, it is clear that $\Gamma>0$ is a sufficient condition for $f_m(\Delta_t)<0$. $\Gamma$ can be viewed as a quadratic in $m$ with a discriminant $D_{\Gamma} = (5-4N)g^4+4(1-N)g^2$. Since the coefficient of $m^2$ is positive, the necessary and sufficient condition for $\Gamma>0,\ \forall m,g,h,N$ is that $D_{\Gamma}<0$, which is satisfied for $N \geq 2$. This proves the lemma.

\end{document}